\def\bm#1{\mbox{\boldmath $#1$}}
\theoremstyle{definition}
\newtheorem{proposition}{Proposition}
\newcommand{\black}[1]{{\textcolor[rgb]{0,0,0}{#1}}}
\begin{document}

\title{Detection and Multi-Parameter Estimation for NLOS Targets: An IRS-assisted Framework}
\author{\IEEEauthorblockN{Zhouyuan Yu, Xiaoling Hu, {\em Member, IEEE},   Chenxi Liu, {\em Senior Member, IEEE}, Qin Tao, {\em Member, IEEE}, and Mugen Peng, {\em Fellow, IEEE}}

\thanks{Zhouyuan Yu, Xiaoling Hu, Chenxi Liu and Mugen Peng  are with the State Key Laboratory of Networking and Switching Technology, Beijing University of Posts and
Telecommunications, Beijing, 100876, China (e-mail: \{zhouyuanyu, xiaolinghu, chenxi.liu, pmg\}@bupt.edu.cn).}
\thanks{Qin Tao is with the School of Information Science and Technology, Hangzhou Normal University, Hangzhou, 311121, China (email: taoqin@hznu.edu.cn).}
\thanks{This work was presented in part at the IEEE WCSP 2023 \cite{yu2023blind}}
\vspace{-1mm}
}


\maketitle

\begin{abstract}
Intelligent reflecting surface (IRS) has the potential to enhance sensing performance, due to its capability of reshaping the echo signals. Different from the existing literature, which has commonly focused on IRS beamforming optimization, in this paper, we pay special attention to designing effective signal processing approaches to extract sensing information from IRS-reshaped echo signals. To this end, we investigate an IRS-assisted non-line-of-sight (NLOS) target detection and multi-parameter estimation problem in orthogonal frequency division multiplexing (OFDM) systems. \black{To address this problem, we first propose a novel detection and direction estimation framework, including a low-overhead hierarchical codebook that allows the IRS to generate three-dimensional beams with adjustable beam direction and width, a delay spectrum peak-based beam training scheme for detection and direction estimation, and a beam refinement scheme for further enhancing the accuracy of the direction estimation.} Then, we propose a target range and velocity estimation scheme by extracting the delay-Doppler information from the IRS-reshaped echo signals. \black{Numerical results demonstrate that the proposed schemes can achieve $99.7\%$ target detection rate, a $10^{-3}$-rad level direction estimation accuracy, and a $10^{-6}$-m/$10^{-5}$-m/s level range/velocity estimation accuracy.}

\end{abstract}

\begin{IEEEkeywords}
Intelligent reflecting surface, target detection, direction estimation, range and velocity estimation, beam training, OFDM.
\end{IEEEkeywords}

\section{Introduction}

Intelligent reflecting surface (IRS) is an emerging technology for future wireless communication networks, which is composed of meta-material units capable of dynamically controlling the phase of the incoming signal to reconfigure the wireless propagation environment\cite{8910627}. Besides, the IRS enables signal manipulation without the need for power amplifiers, which facilitates the reduction in energy consumption and hardware costs.

\par Despite originally intended to enhance communication performance\cite{MeiWeidongIntelligent}, IRS has lately attracted growing interest in sensing applications, such as target detection and parameter estimation. In terms of performance analysis, the work in \cite{BuzziRadar} investigated the potential of the IRS for target detection, and demonstrated the substantial target detection gain attained from the IRS. Later on, the work \cite{BuzziFoundations} analyzed the IRS-assisted target detection performance in an MIMO radar scenario. In \cite{AubryReconfigurable}, the authors addressed the problem of blind-zone radar surveillance with the aid of an IRS, and revealed the significant improvement in signal-to-noise ratio (SNR) with an increased size of IRS. For exploring the IRS's potential for localization, the authors in \cite{HuBeyond} derived the Cramer-Rao lower bound (CRLB) for positioning with IRS, which was verified to decrease quadratically with the surface area of IRS. In addition, Huang \emph{et al}. \cite{HuangJoint} extended the IRS-assisted localization to a multiple-IRS scenario, showing that localization accuracy can be improved by deploying more IRSs.

\black{Owing to the immense potential of IRS in sensing applications, researchers have dedicated substantial efforts to design IRS beamforming in passive IRS-aided sensing systems \cite{SongXianxinIntelligent,esmaeilbeig2023moving, EsmaeilbeigCramérRao, esmaeilbeig2022irs, ZhangMetaLocalization, AhmadRIS}. 
For example, Esmaeilbeig \emph{et al}. designed IRS phase shifts to enhance the performance of target detection \cite{esmaeilbeig2023moving} and target parameter estimation \cite{EsmaeilbeigCramérRao,esmaeilbeig2022irs}, demonstrating that the target sensing performance can be remarkably improved by appropriate IRS beamforming design. For enhancing multi-user localization performance, authors in \cite{ZhangMetaLocalization} jointly designed the active and passive beamforming to maximize the signal strength differences between adjacent users. Furthermore, Bazzi \emph{et al}. \cite{AhmadRIS} considered IRS beamforming design for joint detection and localization in multi-target scenarios, where the IRS beamforming was designed to maximize the strength of the IRS-collected target echo signals and minimize the strength of all other signals. }

\par \black{For facilitating sensing applications, some new IRS-assisted sensing architectures emerge, such as self-sensing IRS \cite{ShaoXiaodanTarget}, semi-passive IRS \cite{HuTWC,yuTSP,qian}, and target-mounted IRS \cite{WangPeilanTarget}. In \cite{ShaoXiaodanTarget}, the IRS controller was equipped with a radio frequency (RF) source for transmitting probing signals, and the IRS was installed with sensors to estimate the target direction by applying a multiple signal classification (MUSIC) algorithm. In \cite{HuTWC}, partial IRS elements can operate in sensing mode, and location information was estimated by combining the total least squares estimation of signal parameters via rotational invariance technique (TLS-ESPRIT) and the MUSIC algorithm. In \cite{WangPeilanTarget}, the IRS was mounted on the sensing target, and the target location and orientation were estimated as that of the IRS by leveraging the IRS beamforming and solving least-square problems.}

\par \black{For the new IRS-assisted sensing architectures, sensing information extraction can be solved through conventional signal processing techniques, due to their additional hardware structures (e.g., sensors and RF sources). For the passive IRS-aided sensing system, the transmitted signals undergo reconfigurable and cascaded echo channels due to the introduction of IRS, which changes the mapping relationship between sensing parameters and echo signals, leading to the intractability of extracting sensing information from the received echo signals and the sensitivity of sensing accuracy to the IRS passive beamforming. Nevertheless, the prevailing studies that use passive IRS as an anchor node to assist sensing tasks primarily concentrate on beamforming optimization under various scenarios and system architectures, while how to design effective signal processing approaches for extracting sensing information from IRS-reshaped echo signals is far from being well investigated.}
\par Motivated by the above issues, in this paper, we consider the non-line-of-sight (NLOS) target sensing problem in an IRS-assisted orthogonal frequency division multiplexing (OFDM) dual-function radar-communication (DFRC) system. To solve this problem, we propose an IRS-assisted NLOS target sensing framework, including sensing protocol, spatial signal processing for detection and direction estimation, as well as delay-Doppler (DD) signal processing for range/velocity estimation.  The main contributions of the article are as follows:
\begin{itemize}
    \item We develop an IRS-assisted NLOS target sensing protocol, where the whole coherent processing interval (CPI) is divided into the coarse-grained sensing (CGS) period and the fine-grained sensing (FGS) period. During the CGS period, the IRS conducts three-dimensional (3D) beam training to detect the presence of the target as well as to estimate its direction. With the IRS beam pointing at the target's direction estimated via CGS, the dual-function base station (DFBS) estimates the target range and velocity (R\&V) during the FGS period. By combining the estimated direction and range, the target location can be obtained.
    \item For the CGS period, we design a target detection and direction estimation scheme, based on the identification that the target presence and direction can be determined from the IRS-reshaped echo signals. \black{Specifically, first, invoking the sub-array partitioning and beam-broadening approaches, we devise a low-overhead hierarchical codebook for IRS 3D beamforming, which enables the IRS to adjust beam direction and width flexibly.} Then, we design a delay spectrum peak (DSP)-based hierarchical beam training (HBT) strategy to estimate the target direction, in which we propose a DSP detector for determining the target presence/absence within the beam scanning area. Finally, based on the linear interpolation technique, we propose a beam refinement (BR) method for further enhancing the direction estimation accuracy. 
    \item For the FGS period, we propose a target R\&V estimation scheme by extracting the DD information from the IRS-reshaped echo signals, where the beam training result obtained via the CGS is exploited to design the IRS beamforming of the FGS period for providing higher beamforming gain.
    \item Through both analytical and numerical results, we prove that with uniform power allocation among $N$ sub-carriers (SCs), using the DSP detector to process echo signals yields $N$ times signal processing gain than the power detector. Besides, the DSP-based HBT can reach a remarkable target detection success rate of $99.7\%$. \black{Moreover, after conducting the BR process, the target direction can be precisely estimated with $10^{-3}$-rad level accuracy.} In addition, by utilizing the IRS beam training result to assist FGS, the proposed R\&V estimation scheme for NLOS target sensing can achieve $10^{-6}$-m level range estimation accuracy and $10^{-5}$-m/s level velocity estimation accuracy, respectively.
\end{itemize}

\par The remainder of this paper is organized as follows. Section~\ref{section2} describes the 
system model of the IRS-assisted target sensing system. Section~\ref{section3} introduces the use of IRS 3D beam training for target detection and direction estimation, while Section~\ref{section4} presents the target R\&V estimation via DD estimation. Section~\ref{section5} extends the proposed system to the general multi-target case. The numerical results on the system performance are given in Section~\ref{section6}, and Section~\ref{section7} draw conclusions of this paper.

\emph{{Notations:}} Throughout this paper, the boldface upper/lower case represents matrices/vectors. $( \cdot ) ^{\mathrm{T}}$ and $( \cdot ) ^{\mathrm{H}}$ stand for transpose and Hermitian transpose, respectively. $\| \cdot \|$ and $\mathbb{E} \{ \cdot \} $ respectively denote the Euclidean norm and the expected value function. $\mathcal{C} \mathcal{N} ( 0,\sigma ^2 ) $ denotes the complex Gaussian distribution with mean $0$ and variance $\sigma ^2$. $\text{diag}( \cdot ) $ denotes the diagonal operation. Moreover, $\mathbb{P} \left( \cdot \right) $ is the probability of an event.
\section{System Model}\label{section2}
\vspace{-1mm}
\begin{figure}[htbp]
  \centering
  \includegraphics[width=3.2in]{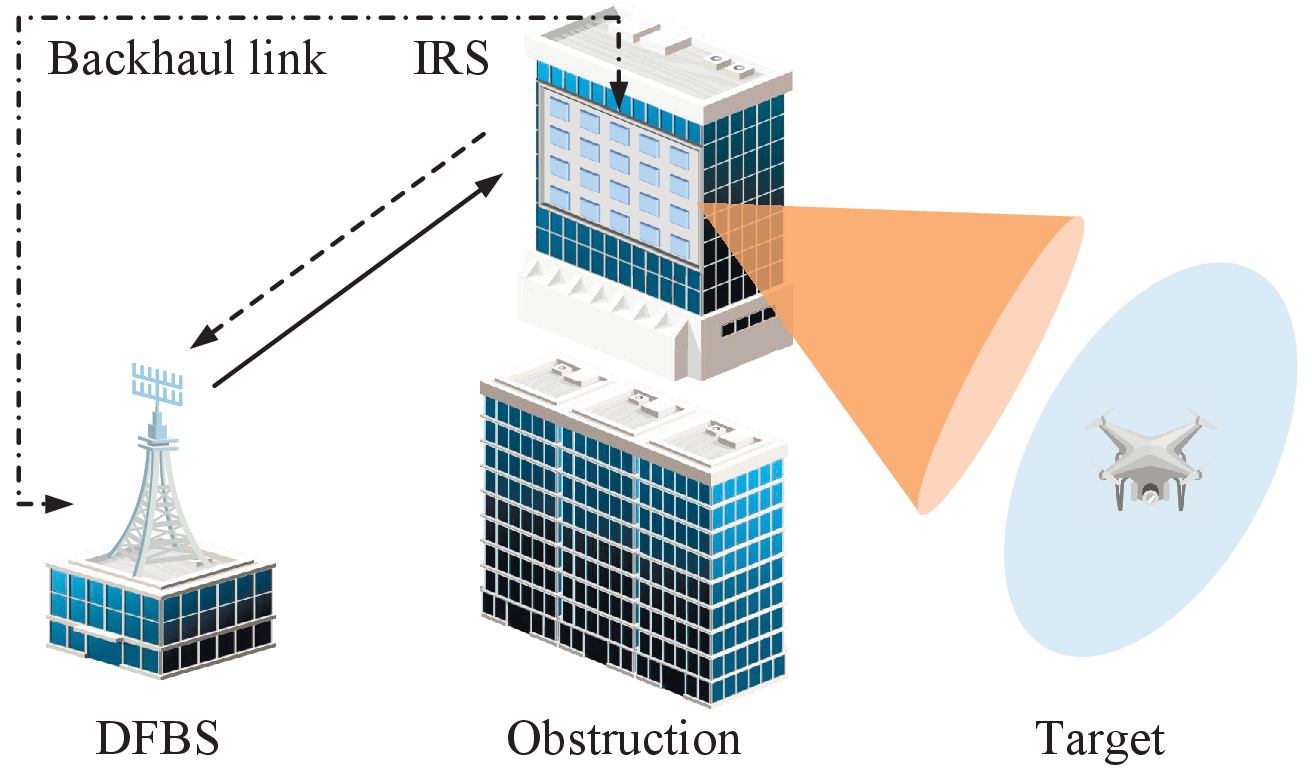}
  \caption{IRS-assisted target sensing system.}
  \label{system_model1}
\end{figure}
As shown in Fig.~\ref{system_model1}, this paper investigates an IRS-assisted NLOS target sensing system, where a DFBS with uniform linear array (ULA) consisting of $N_{\mathrm{B}}$ transmit/receive antennas, transmits the OFDM waveform and receives the echo signals reflected by an IRS with $M\!\!\times\!\! M$ elements. \footnote{The OFDM waveform is widely used for radar sensing \cite{han2013joint} and is a promising candidate for integrated sensing and communication\cite{10012421}, owing to its high spectral efficiency, robustness against multipath effect, and flexibility for resource allocation.} The ULA of the DFBS and the IRS are placed parallel to the $y$ axis and the $y$-$o$-$z$ plane, respectively. The overall bandwidth consists of $N$ equally-spaced SCs, where the frequency of the $n$-th ($n\!\in \!\mathcal{N} \!\triangleq \!\left\{ 0,\cdots ,N-1 \right\} $) SC is $f_n=f_c+n\varDelta f$, with $f_c$ and $\varDelta f$ respectively denote the carrier frequency and SC spacing.\footnote{We consider the scenario with constant wavelength $\lambda $ across the operational bandwidth\cite{9326394}.}
\subsection{Sensing protocol}
\begin{figure}[htbp]
  \centering
  \includegraphics[width=3.4in]{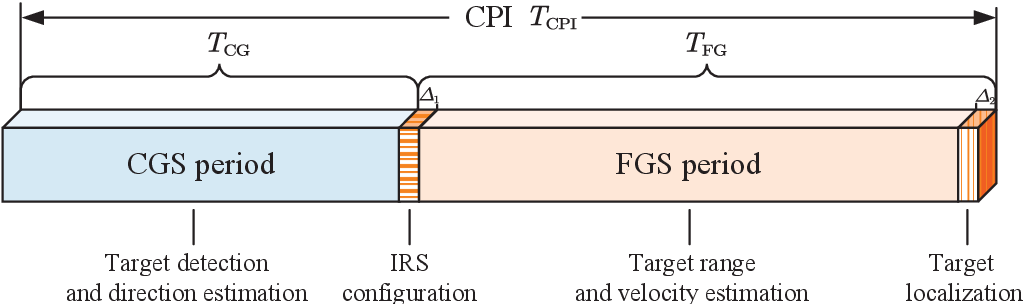}
  \caption{The sensing protocol during one CPI.}
  \label{protocol}
\end{figure}
As shown in Fig.~\ref{protocol}, we consider a CPI during which the target states (e.g., velocity and location) remain constant. One CPI is divided into two periods: the CGS period consisting of $T_{\mathrm{CG}}$ OFDM symbols, and the FGS period consisting of $T_{\mathrm{FG}}$ OFDM symbols. In the CGS period, the IRS conducts 3D beam training to detect the target presence as well as to estimate its direction. During the FGS period, the IRS is configured according to the beam training result of the CGS period. Then, the DFBS conducts target R\&V estimation by extracting DD information from the echo signals reshaped by the IRS. Finally, combining the target direction estimated during the CGS period and the target range estimated during the FGS period, the target location is determined. In addition, the first $\varDelta _1$ and last $\varDelta _2$ OFDM symbols are guard time for IRS configuration and location calculation, respectively, which are trivial compared to the whole CPI and will be ignored hereafter.
\vspace{-1mm}
\subsection{Channel Model}
\par The IRS is deployed in preferable locations that possess LoS links to both the DFBS and target. Therefore, the DFBS-IRS channel at the $n$-th ($n\in \mathcal{N}$) SC can be modeled by \cite{HurMil}
\begin{align}\label{HB2I}
\mathbf{G}_{\mathrm{B}2\mathrm{R},n}=\alpha _{\mathrm{B}2\mathrm{R},n}\mathbf{b}_{\mathrm{I}}\left( u_{\mathrm{B}2\mathrm{R}}^{\mathrm{A}},v_{\mathrm{B}2\mathrm{R}}^{\mathrm{A}} \right) \mathbf{a}_{\mathrm{B}}^{\mathrm{H}}\left( u_{\mathrm{B}2\mathrm{R}}^{\mathrm{D}} \right) ,
\end{align}
where $\alpha _{\mathrm{B}2\mathrm{R},n}$ is the complex channel gain for the DFBS-IRS link, which is described by \cite{ChenExploiting}
\begin{align}\label{alphaB2I}
\alpha _{\mathrm{B}2\mathrm{R},n}=a_{\mathrm{B}2\mathrm{R}}e^{\frac{-j2\pi f_nd_{\mathrm{B}2\mathrm{R}}}{c}},
\end{align}
where $d_{\mathrm{B}2\mathrm{R}}$ is the range from the DFBS to the IRS, $c$ is the lightspeed, $a_{\mathrm{B}2\mathrm{R}}$ denotes the path loss, which is given by  
\begin{align}
a_{\mathrm{B}2\mathrm{R}}=10^{-\frac{PL\left( d_0 \right)}{20}}\left( \frac{d_{\mathrm{B}2\mathrm{R}}}{d_0} \right) ^{-\frac{1}{2}\epsilon _{\mathrm{B}2\mathrm{R}}},
\end{align}
where $d_0$ is the reference distance, $\epsilon _{\mathrm{B}2\mathrm{R}}$ denotes the path loss exponent (PLE).
Besides, $\mathbf{a}_{\mathrm{B}}\left( \cdot \right) $/$\mathbf{b}_{\mathrm{I}}\left( \cdot \right)$ denote the DFBS/IRS array response vector. The normalized
 azimuth/elevation angles of arrival (AoA) (i.e., $u_{\mathrm{B}2\mathrm{R}}^{\mathrm{A}}$/$v_{\mathrm{B}2\mathrm{R}}^{\mathrm{A}}$) and the normalized angle of departure (AoD) (i.e., $u_{\mathrm{B}2\mathrm{R}}^{\mathrm{D}}$) are described by
\begin{align}
&u_{\mathrm{B}2\mathrm{R}}^{\mathrm{A}}=2 \frac{d_{\mathrm{R}}}{\lambda}\cos \left( \gamma _{\mathrm{B}2\mathrm{R}}^{\mathrm{A}} \right) \sin \left( \varphi _{\mathrm{B}2\mathrm{R}}^{\mathrm{A}} \right) ,
\\
&v_{\mathrm{B}2\mathrm{R}}^{\mathrm{A}}=2 \frac{d_{\mathrm{R}}}{\lambda}\sin \left( \gamma _{\mathrm{B}2\mathrm{R}}^{\mathrm{A}} \right) ,\quad u_{\mathrm{B}2\mathrm{R}}^{\mathrm{D}}=2 \frac{d_{\mathrm{B}}}{\lambda}\sin \left( \theta _{\mathrm{B}2\mathrm{R}}^{\mathrm{D}} \right) ,
\end{align}
where $d_{\mathrm{R}}$ and $d_{\mathrm{B}}$ respectively represent the element spacing of the IRS and the DFBS, $\theta _{\mathrm{B}2\mathrm{R}}^{\mathrm{D}}$, $\gamma _{\mathrm{B}2\mathrm{R}}^{\mathrm{A}}$, and $\varphi _{\mathrm{B}2\mathrm{R}}^{\mathrm{A}}$ are the AoD, elevation AoA, and azimuth AoA for the DFBS-IRS link, respectively.
\par Similarly, the IRS-target channel at the $n$-th ($n\in \mathcal{N}$) SC is modeled by
\begin{align}\label{hI2T}
\mathbf{g}_{\mathrm{R}2\mathrm{G},n}=\alpha _{\mathrm{R}2\mathrm{G},n}\mathbf{b}_{\mathrm{I}}^{\mathrm{H}}\left( u_{\mathrm{R}2\mathrm{G}}^{\mathrm{D}},v_{\mathrm{R}2\mathrm{G}}^{\mathrm{D}} \right),
\end{align}
where $\alpha _{\mathrm{R}2\mathrm{G},n}$ is the complex channel gain for the IRS-target link, which is given by 
\begin{align}
\alpha _{\mathrm{R}2\mathrm{G},n}=a_{\mathrm{R}2\mathrm{G}}\exp \left( \frac{-j2\pi f_nd_{\mathrm{R}2\mathrm{G}}}{c} \right)  ,
\end{align}
where $a_{\mathrm{R}2\mathrm{G}}$ represents the path loss, which is given by
\vspace{-2mm}
\begin{align}\label{aR2G}
a_{\mathrm{R}2\mathrm{G}}=10^{-\frac{PL\left( d_0 \right)}{20}}\left( \frac{d_{\mathrm{R}2\mathrm{G}}}{d_0} \right) ^{-\frac{1}{2}\epsilon _{\mathrm{R}2\mathrm{G}}},
\end{align}
with $\epsilon _{\mathrm{R}2\mathrm{G}}$ denotes the PLE. $d_{\mathrm{R}2\mathrm{G}}$ is the range from the IRS to the target. The normalized azimuth AoD and the normalized elevation AoD are respectively described by 
\begin{align}
&u_{\mathrm{R}2\mathrm{G}}^{\mathrm{D}}=2\frac{d_{\mathrm{R}}}{\lambda}\cos \left( \gamma _{\mathrm{R}2\mathrm{G}}^{\mathrm{D}} \right) \sin \left( \varphi _{\mathrm{R}2\mathrm{G}}^{\mathrm{D}} \right) ,\label{P43}
\\
&v_{\mathrm{R}2\mathrm{G}}^{\mathrm{D}}=2\frac{d_{\mathrm{R}}}{\lambda}\sin \left( \gamma _{\mathrm{R}2\mathrm{G}}^{\mathrm{D}} \right) ,\label{P44}
\end{align}
where $\gamma _{\mathrm{R}2\mathrm{G}}^{\mathrm{D}}$/$\varphi _{\mathrm{R}2\mathrm{G}}^{\mathrm{D}}$ denote the elevation/azimuth AoD for the IRS-target link.
\par For simplicity, we assume that $d_{\mathrm{R}}=d_{\mathrm{B}}=\lambda /2$. Hence, $\mathbf{a}_{\mathrm{B}}\left( u \right) $ and $\mathbf{b}_{\mathrm{I}}\left( u,v \right) $ are respectively modeled as
\begin{align}
\mathbf{a}_{\mathrm{B}}\left( u \right) &=\left[ 1,e^{j\pi u},\cdots ,e^{j\pi \left( N_{\mathrm{B}}-1 \right) u} \right] ^{\mathrm{T}},
\\
\mathbf{b}_{\mathrm{I}}\left( u,v \right) &=\left[ 1,e^{j\pi u},\cdots ,e^{j\pi \left( M-1 \right) u} \right] ^{\mathrm{T}}
\\
&\quad\otimes \left[ 1,e^{j\pi v},\cdots ,e^{j\pi \left( M-1 \right) v} \right] ^{\mathrm{T}}.\label{b}\notag
\end{align}
\subsection{Signal Model}
We pack the transmit symbol during one CPI into an $N\times T_{\mathrm{CPI}}$ matrix, which is described by
\begin{align}
\mathbf{S}=\small{\left[ \begin{matrix}
	s_{1,1}&		\cdots&		s_{1,T_{\mathrm{CPI}}}\\
	\vdots&		\ddots&		\vdots\\
	s_{N,1}&		\cdots&		s_{N,T_{\mathrm{CPI}}}\\
\end{matrix} \right]},
\end{align}
where $s_{n,l}$, satisfying $\mathbb{E} \{|s_{n,l}|^2\}=1$, is the $l$-th OFDM symbol on the $n$-th SC. Then, we describe the compact transmit precoding matrix by
\begin{align}
\mathbf{W}_{\mathrm{t}}=\mathrm{diag}\left( \mathbf{w}_{\mathrm{t},1},\cdots ,\mathbf{w}_{\mathrm{t},n} \right) \in \mathbb{C} ^{NN_{\mathrm{B}}\times N},
\end{align}
where $\mathbf{w}_{\mathrm{t},n}\in \mathbb{C} ^{N_{\mathrm{B}}\times 1}$ is the transmit precoding vector on the $n$-th SC, which satisfies $\left\| \mathbf{w}_{\mathrm{t},n} \right\| ^2=1$,
\par For signal transmission, $\mathbf{S}$ is precoded by $\mathbf{W}_{\mathrm{t}}$ in the frequency domain at first, which obtains the precoded symbol matrix
\begin{align}
\mathbf{X}=\mathbf{W}_{\mathrm{t}}\mathbf{S}=\small{\left[ \begin{matrix}
	\mathbf{x}_{1,1}&		\cdots&		\mathbf{x}_{1,T_{\mathrm{CPI}}}\\
	\vdots&		\ddots&		\vdots\\
	\mathbf{x}_{N,1}&		\cdots&		\mathbf{x}_{N,T_{\mathrm{CPI}}}\\
\end{matrix} \right]}\in \mathbb{C} ^{NN_{\mathrm{B}}\times T_{\mathrm{CPI}}},
\end{align}
where $\mathbf{x}_{n,l}=\mathbf{w}_{\mathrm{t},n}s_{n,l}\in \mathbb{C} ^{N_{\mathrm{B}}\times 1}$ denotes the $l$-th precoded symbol vector on the $n$-th SC. Subsequently, with an $N$-point inverse fast Fourier transform (IFFT), $\mathbf{X}$ is transformed into the time domain and inserted with a cyclic prefix (CP) whose length $T_{\mathrm{cp}}$ is longer than the multipath delay spread\cite{ZhengInte}. 
\par  After CP removal and conducting the $N$-point fast Fourier transform (FFT), the received baseband frequency-domain echo signals at the DFBS can be expressed as\footnote{The DFBS is full-duplex that transmits and receives simultaneously with self-interference cancellation \cite{SabharwalInBand}.}
\begin{align}
\tilde{\mathbf{y}}_{\mathrm{B},n,l}=&\sqrt{p_{\mathrm{T},n}}\alpha _{\mathrm{G}}\bar{\mathbf{G}}_n\left( \boldsymbol{\xi }\left( l \right) \right) \mathbf{x}_{n,l}e^{j2\pi lT_{\mathrm{O}}f_{\mathrm{D}}}+\mathbf{n}_{n,l},\notag
\\
&\quad n\in \mathcal{N} ,l\in \mathcal{L} \triangleq \left\{ 0,\cdots ,\left( T_{\mathrm{CPI}}-1 \right) \right\} ,
\end{align}
where $p_{\mathrm{T},n}$ is the allocated power on the $n$-th SC, $\alpha _{\mathrm{G}}$ is the target radar cross section (RCS) with $\mathbb{E} \{ \left| \alpha _{\mathrm{G}} \right|^2 \} =\zeta _{\mathrm{G}}^{2}$, $T_{\mathrm{O}}=1/\varDelta f+T_{\mathrm{cp}}$ represents the OFDM symbol duration, $f_{\mathrm{D}}=2\mathrm{v}_{\mathrm{G}}/\lambda $ denotes the Doppler shift with $\mathrm{v}_{\mathrm{G}}$ being the target velocity, respectively\cite{BraunMaximum}. $\mathbf{n}_{n,l}\sim \mathcal{C} \mathcal{N} \left( 0,\sigma _{0}^{2}\mathbf{I}_{N_{\mathrm{B}}} \right) $ denotes the additive white Gaussian noise (AWGN). In addition, the effective target echo channel at the $n$-th SC is defined as
\begin{align}
&\bar{\mathbf{G}}_n\left( \boldsymbol{\xi }\left( l \right) \right) =
\\
&\mathbf{G}_{\mathrm{B}2\mathrm{R},n}^{\mathrm{T}}\mathrm{diag}\left( \boldsymbol{\xi }\left( l \right) \right) \mathbf{g}_{\mathrm{R}2\mathrm{G},n}^{\mathrm{T}}\mathbf{g}_{\mathrm{R}2\mathrm{G},n}\mathrm{diag}\left( \boldsymbol{\xi }\left( l \right) \right) \mathbf{G}_{\mathrm{B}2\mathrm{R},n},\notag
\end{align}
where $\boldsymbol{\xi }\left( l \right) =\left[ e^{j\vartheta _1\left( l \right)},\cdots ,e^{j\vartheta _m\left( l \right)},\cdots ,e^{j\vartheta _{M^2}\left( l \right)} \right] ^{\mathrm{T}}$ denotes the IRS phase shift beam at the $l$-th OFDM symbol.
\par Once the DFBS receives the echo signals, the signals are beamformed using the receive combiner, which obtains the received signal matrix $\mathbf{Y}_{\mathrm{B}}$, whose $(n,l)$-th element is 
\begin{align}\label{yBS}
&\left[ \mathbf{Y}_{\mathrm{B}} \right] _{n,l}\triangleq y_{\mathrm{B},n,l}=\mathbf{w}_{\mathrm{r},n}^{\mathrm{T}}\tilde{\mathbf{y}}_{\mathrm{B},n,l}
\\
&\qquad=\sqrt{p_{\mathrm{T},n}}\alpha _{\mathrm{G}}\mathbf{w}_{\mathrm{r},n}^{\mathrm{T}}\bar{\mathbf{G}}_n\left( \boldsymbol{\xi }\left( l \right) \right) \mathbf{x}_{n,l}e^{j2\pi lT_{\mathrm{O}}f_{\mathrm{D}}}+\mathbf{w}_{\mathrm{r},n}^{\mathrm{T}}\mathbf{n}_{n,l},\notag
\end{align}
where $\mathbf{w}_{\mathrm{r},n}\in \mathbb{C} ^{N_{\mathrm{B}}\times 1}$, satisfying $\left\| \mathbf{w}_{\mathrm{r},n} \right\| ^2=1$, is the receive combining vector on the $n$-th SC.

\section{Target Detection and Direction Estimation via IRS 3D Beam Training}\label{section3}
\par During the CGS period, an IRS 3D beam training scheme is proposed to detect the presence of the target and estimate its direction. Specifically, first, we design a hierarchical codebook for IRS 3D beamforming. Then, based on the hierarchical codebook, a low-overhead 3D HBT scheme is proposed for estimating the target direction, where a DSP detector is devised to determine the presence of the target within the current beam scanning area. Finally, we propose a BR method for further enhancing the direction estimation accuracy. 
\subsection{Problem Statement}
\par First, for the beamforming design at the DFBS, we denote $A_{\mathrm{B}}\left( \mathbf{w},u_{\mathrm{B}} \right) \triangleq \left| \mathbf{a}_{\mathrm{B}}^{\mathrm{H}}\left( u_{\mathrm{B}} \right) \mathbf{w} \right|$ as the beam gain of $ \mathbf{w}$ towards the spatial direction $u_{\mathrm{B}}\in \left[ -1,1 \right] $, where $ \mathbf{w}$ denotes either $\mathbf{w}_{\mathrm{t},n}$ or $\mathbf{w}_{\mathrm{r},n}$. For facilitating the IRS-assisted sensing task of the NLOS target, the DFBS beamforming should be designed to maximize beam gain towards the IRS. Hence, $\mathbf{w}_{\mathrm{t},n}$ and $\mathbf{w}_{\mathrm{r},n}$ can be designed as
\begin{align}\label{BSopt}
\mathbf{w}_{\mathrm{t},n}^{\mathrm{opt}}=\mathbf{w}_{\mathrm{r},n}^{\mathrm{opt}}&=\mathop {\mathrm{arg}\max} \limits_{\mathbf{w}}A_{\mathrm{L}}\left( \mathbf{w},u_{\mathrm{B}2\mathrm{R}}^{\mathrm{D}} \right) \\
&=\mathbf{w}_{\mathrm{B}}^{\mathrm{opt}}\triangleq \frac{1}{\sqrt{N_{\mathrm{B}}}}\mathbf{a}_{\mathrm{B}}\left( u_{\mathrm{B}2\mathrm{R}}^{\mathrm{D}} \right) , n\in \mathcal{N}.\notag
\end{align}
\par Then, for the beamforming design at the IRS, we denote $A_{\mathrm{R}}\left( \boldsymbol{\xi },u_{\mathrm{R}},v_{\mathrm{R}} \right) $ as the beam gain of $\boldsymbol{\xi }$ towards the spatial direction $\left( u_{\mathrm{R}},v_{\mathrm{R}} \right) $ ($u_{\mathrm{R}},v_{\mathrm{R}}\in \left[ -1,1 \right] $), which is described by
\begin{align}
A_{\mathrm{R}}\!\left( \boldsymbol{\xi },\!u_{\mathrm{R}},\!v_{\mathrm{R}} \right) \!\triangleq \!\left| \mathbf{b}_{\mathrm{I}}^{\mathrm{H}}\!\left( u_{\mathrm{R}},\!v_{\mathrm{R}} \right) \mathrm{diag}\!\left( \boldsymbol{\xi } \right) \mathbf{b}_{\mathrm{I}}\!\left( u_{\mathrm{B}2\mathrm{R}}^{\mathrm{A}},\!v_{\mathrm{B}2\mathrm{R}}^{\mathrm{A}} \right)  \right|.
\end{align}
The phase shift beam which maximizes the beam gain along the direction $\left( u_{\mathrm{R}},v_{\mathrm{R}} \right) $ satisfies
\begin{subequations} 
\begin{align}
\text {(P1)}: \max_{\boldsymbol{\xi }}  \,\,&A_{\mathrm{R}}\!\left( \boldsymbol{\xi },\!u_{\mathrm{R}},\!v_{\mathrm{R}} \right) ,\\
\text{s.t.}\quad\,&\left| \left[ \boldsymbol{\xi } \right] _m \right|=1,m\in \left\{ 1,\cdots M \right\} ,
\end{align}
\end{subequations}
which yields
\begin{align}\label{anglerelation}
\boldsymbol{\xi }^{\mathrm{opt}}\left( u_{\mathrm{R}},v_{\mathrm{R}} \right) &=\mathrm{diag}\left( \mathbf{b}_{\mathrm{I}}^{*}\left( u_{\mathrm{B}2\mathrm{R}}^{\mathrm{A}},v_{\mathrm{B}2\mathrm{R}}^{\mathrm{A}} \right) \right) \mathbf{b}_{\mathrm{I}}\left( u_{\mathrm{R}},v_{\mathrm{R}} \right)  \notag
\\
&=\mathbf{b}_{\mathrm{I}}^{*}\left( u_{\mathrm{B}2\mathrm{R}}^{\mathrm{A}},v_{\mathrm{B}2\mathrm{R}}^{\mathrm{A}} \right) \circ \mathbf{b}_{\mathrm{I}}\left( u_{\mathrm{R}},v_{\mathrm{R}} \right) \notag
\\
&=\mathbf{b}_{\mathrm{I}}\left( u_{\mathrm{R}}-u_{\mathrm{B}2\mathrm{R}}^{\mathrm{A}},v_{\mathrm{R}}-v_{\mathrm{B}2\mathrm{R}}^{\mathrm{A}} \right) .
\end{align}
Hence, a straightforward approach for target detection and direction estimation is using the exhaustive beam searching (EBS) scheme to search the entire angle range \cite{WangJoint}. However, considering the massive number of IRS elements, the EBS scheme is impractical since it requires the overhead of at least $M^2$ OFDM symbols. Responding to this, we design a low-overhead yet efficient hierarchical codebook for IRS 3D beam training in the following.
\subsection{Hierarchical codebook design}
\vspace{-2mm}
\begin{figure}[ht]
\vspace{-2mm}
  \centering
  \subfigure[Normalized azimuth domain.]  
  {
  \label{u}
  \includegraphics[width=2.5in]{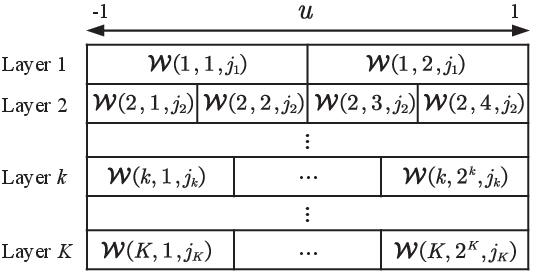}
  }
  \subfigure[Normalized elevation domain.]
  {
  \label{v}
  \includegraphics[width=2.5in]{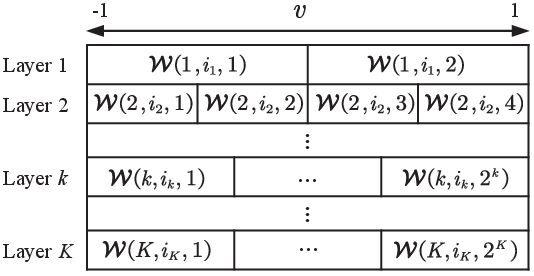}
  }
  \caption{Beam coverage of the Hierarchical codebook.}
\end{figure}
Fig.~\ref{u}/Fig.~\ref{v} illustrates the spatial coverage of the hierarchical codebook in the normalized azimuth/elevation direction, where the $\left( i_k,j_k \right)$-th codeword on the $k$-th layer is denoted by $\bm{\mathcal{W}}\left( k,i_k,j_k \right) \in \mathbb{C} ^{M^2\times 1}$,
with the azimuth/elevation index $i_k/j_k$ belonging to the set $\mathcal{F} _k=\left\{ 1,\cdots ,2^k \right\} $. Specifically, the hierarchical codebook has $K= \log _2M$ layers, with $M$ being a dyadic integer. For the $k$-th layer, there are totally $4^k$ codewords, the $\left( i_k,j_k \right)$-th codeword points to the normalized spacial direction $\left( -1+\frac{2i_k-1}{2^k},-1+\frac{2j_k-1}{2^k} \right)$ with beam width $2/2^k$, and the union of $4^k$ codewords covers the entire angle range  $\Omega \triangleq \left\{ \left( u,v \right) |u,v\in \left[ -1,1 \right] \right\} $. Below we provide the details of the hierarchical codebook design. 
\par First, for the codewords in the $K$-th layer that form narrow beams with beam width $2/M$, we design them to follow the form of $\mathbf{b}_{\mathrm{I}}$ and uniformly distribute in $\Omega $
\begin{align}
\bm{\mathcal{W}}\left( K,i_K,j_K \right) =\mathbf{b}_{\mathrm{I}}\left(\frac{2i_K-1}{M}-1,\frac{2j_K\!-\!1}{M}-1 \right) .
\end{align}
\par Then, for the codewords in the $k$-th ($k\in \left\{ 1,\cdots ,K-1 \right\} $) layer that form wide beams with beam width $2/2^k$, our basic idea is to broaden the beam layer-by-layer. Specifically, it is well known that concentrating the power of an $N$-element ULA in a specific direction yields a beam with beam width $2/N$. If the ULA is partitioned into $S$ sub-arrays that generate $S$ sub-beams pointing at sufficiently-spaced directions, its beam width can be broadened by $S^2$ times, as depicted in Fig.~\ref{broaden}. Motivated by this, we propose to design the IRS codewords by combining the sub-array partitioning and beam-broadening approaches. 
\begin{figure}[htbp]
  \centering
   \includegraphics[width=3.2in]{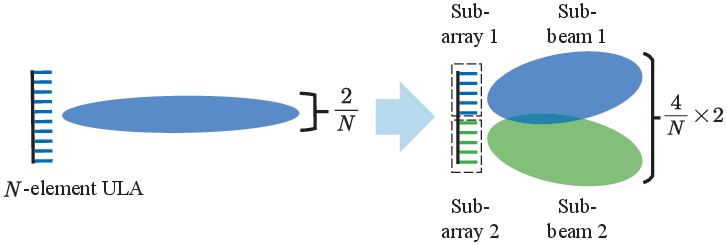}
  \caption{Beam-broadening for the case of $S=2$.}
  \label{broaden}
\end{figure}
\par Let $l\triangleq K-k$. For the $k$-th layer, if $l$ is even, we first partition the IRS into $S_{\mathrm{e},l}^2$ sub-arrays
each with $M_{\mathrm{e},l}\times M_{\mathrm{e},l}$ elements, where $S_{\mathrm{e},l}\triangleq 2^{l/2}$ and $M_{\mathrm{e},l}\triangleq M/S_{\mathrm{e},l}$. Then, we recast the codeword as
\begin{align}
\bm{\mathcal{W}}\left( k,i_k,j_k \right) =\bm{\mathcal{V}}_{\mathrm{e},u}\left( k,i_k \right) \otimes \bm{\mathcal{V}}_{\mathrm{e},v}\left( k,j_k \right) ,
\end{align}
where $\bm{\mathcal{V}}_{\mathrm{e},u}\left( k,i_k \right) 
\in \mathbb{C} ^{M\times 1}$ and $\bm{\mathcal{V}}_{\mathrm{e},v}\left( k,j_k \right) \in \mathbb{C} ^{M\times 1}$ respectively denote the azimuth and elevation beamforming vectors, which can be designed separately with the same procedure due to the same structure and decoupled relationship.
\par For the sake of generality, we focus on designing the azimuth beamforming vectors $\bm{\mathcal{V}}_{\mathrm{e},u}\left( k,i_k \right) $. Specifically, based on the sub-array partitioning approach, we transform the design of $\bm{\mathcal{V}}_{\mathrm{e},u}\left( k,i_k \right)$ into the design of an $M$-element ULA, which consists of $S_{\mathrm{e},l}$ sub-ULAs each with $M_{\mathrm{e},l}$ elements. Since an $M_{\mathrm{e},l}$-element sub-ULA can form a sub-beam with beam width 
\begin{align}
\frac{2}{M_{\mathrm{e},l}}=\frac{2S_{\mathrm{e},l}}{M}=\frac{2}{2^k2^{\frac{K-k}{2}}}=\frac{2}{2^kS_{\mathrm{e},l}}.
\end{align}
Based on the beam-broadening approach, a wide beam centered at $\psi \left( k,i_k \right) \triangleq -1+\frac{2i_k-1}{2^k}$ with beam width $2/2^k$ can be obtained by letting the $S_{\mathrm{e},l}$ sub-ULAs form $S_{\mathrm{e},l}$ sub-beams with $2/2^kS_{\mathrm{e},l}$ sub-beam spacing. Therefore, the beamforming of the $s$-th ($s=1,\cdots ,S_{\mathrm{e},l}$) sub-ULA is designed as
\begin{align}\label{sub-ULA}
\bm{\mathcal{V}}_{\mathrm{e},u}^{s}\left( k,i_k \right) =\mathbf{a}\left( M_{\mathrm{e},l},\frac{2s-1}{2^kS_{\mathrm{e},l}}+\psi \left( k,i_k \right) -\frac{1}{2^k} \right),
\end{align}
where
\begin{align}
\mathbf{a}\left( N_a,\psi \right) \triangleq \left[ 1,\cdots ,e^{j\pi \left( n-1 \right) \psi},\cdots ,e^{j\pi \left( N_a-1 \right) \psi} \right] ^{\mathrm{T}}.
\end{align}
As such, $\bm{\mathcal{V}}_{\mathrm{e},u}\left( k,i_k \right) $ are obtained as
\begin{align}
&\bm{\mathcal{V}}_{\mathrm{e},u}\left( k,i_k \right) =\left[ e^{-j\zeta _{\mathrm{e},l}}\bm{\mathcal{V}}_{\mathrm{e},u}^{1}\left( k,i_k \right) ^{\mathrm{T}},\cdots \right. 
\\
&\quad \left. ,e^{-js\zeta _{\mathrm{e},l}}\bm{\mathcal{V}}_{\mathrm{e},u}^{s}\left( k,i_k \right) ^{\mathrm{T}},\cdots ,e^{-jS_{\mathrm{e},l}\zeta _{\mathrm{e},l}}\bm{\mathcal{V}}_{\mathrm{e},u}^{S_{\mathrm{e},l}}\left( k,i_k \right) ^{\mathrm{T}} \right] ^{\mathrm{T}},\notag
\end{align}
where $\zeta _{\mathrm{e},l}=\frac{M_{\mathrm{e},l}-1}{M_{\mathrm{e},l}}\pi $, $e^{-js\zeta _{\mathrm{e},l}}$ is designed for reducing the fluctuation of the beam\cite{XiaoHiera}, which is called beam compensation (BC) coefficient. 
\par Following the same procedure, we designed $\bm{\mathcal{V}}_{\mathrm{e},v}\left( k,j_k \right)$, and the codeword $\bm{\mathcal{W}}\left( k,i_k,j_k \right) $ is finally obtained as
\begin{align}
\bm{\mathcal{W}}\left( k,i_k,j_k \right) =\bm{\mathcal{V}}_{\mathrm{e},u}\left( k,i_k \right) \otimes \bm{\mathcal{V}}_{\mathrm{e},u}\left( k,j_k \right) .
\end{align}

\par For the $k$-th layer where $l=K-k$ is odd, we first equally divide the IRS into $4$ groups, in which the first group forms a wide beam while others point to the edge of $\Omega $ to avoid interfering with the first group. Then, we partition the first group into $S_{\mathrm{o},l}^2$ sub-arrays each with $M_{\mathrm{o},l}\times M_{\mathrm{o},l}$ IRS elements, where
$S_{\mathrm{o},l}\triangleq 2^{\frac{l-1}{2}}$ and $M_{\mathrm{o},l}\triangleq M/2S_{\mathrm{o},l}$. Next, we recast the codeword as
\begin{align}
\bm{\mathcal{W}}\left( k,i_k,j_k \right) =\left[ \bm{\mathcal{W}}_{\mathrm{1}}^{\mathrm{T}}\left( k,i_k,j_k \right) ,\bm{\mathcal{E}}^{\mathrm{T}},\bm{\mathcal{E}}^{\mathrm{T}},\bm{\mathcal{E}}^{\mathrm{T}} \right] ^{\mathrm{T}},
\end{align}
where $\bm{\mathcal{E}}$ is the beamforming vector of the $i$-th ($i=2,3,4$) group, which is designed to point towards the edge of $\Omega$, i.e.,
\begin{align}
\bm{\mathcal{E}}=\mathbf{a}\left( \frac{M}{2},1 \right) \otimes \mathbf{a}\left( \frac{M}{2},1 \right) ,
\end{align}
and $\bm{\mathcal{W}}_{\mathrm{1}}$ denotes the beamforming vector of the first group, for designing which, we recast it as
\begin{align}
\bm{\mathcal{W}}_{\mathrm{1}}\left( k,i_k,j_k \right) =\bm{\mathcal{V}}_{\mathrm{o},u}\left( k,i_k \right) \otimes \bm{\mathcal{V}}_{\mathrm{o},v}\left( k,j_k \right),
\end{align}
where $\bm{\mathcal{V}}_{\mathrm{o},u}( k,i_k ) 
\in \mathbb{C} ^{M/2\times 1}$ and $\bm{\mathcal{V}}_{\mathrm{o},v}( k,j_k ) \in \mathbb{C} ^{M/2\times 1}$ respectively denote the azimuth and elevation beamforming vectors, which can also be designed separately with the same procedure. 
\par For the sake of generality, we focus on designing the
azimuth beamforming vectors $\bm{\mathcal{V}}_{\mathrm{o},u}\left( k,i_k \right) $, and transform it into the beamforming design of an $M/2$-element ULA, which consists of $S_{\mathrm{o},l}$ sub-ULAs each with $M_{\mathrm{o},l}$ elements. Since an $M_{\mathrm{o},l}$-element sub-array can form a sub-beam with the beam width
\begin{align}
\frac{2}{M_{\mathrm{o},l}}=\frac{4S_{\mathrm{o},l}}{M}=\frac{2}{2^{\frac{K+k-1}{2}}}=\frac{2}{2^kS_{\mathrm{o},l}}.
\end{align}
Therefore, similar to designing $\bm{\mathcal{V}}_{\mathrm{e},u}\left( k,i_k \right) $, we design the beamforming of the $s$-th ($s=1,\cdots ,S_{\mathrm{o},l}$) sub-ULA as
\begin{align}
\bm{\mathcal{V}}_{\mathrm{o},u}^{s}\left( k,i_k \right) =\mathbf{a}\left( M_{\mathrm{o},l},\frac{2s-1}{2^kS_{\mathrm{o},l}}+\psi \left( k,i_k \right) -\frac{1}{2^k} \right),
\end{align}
and design $\bm{\mathcal{V}}_{\mathrm{o},u}\left( k,i_k \right) $ as
\begin{align}
&\bm{\mathcal{V}}_{\mathrm{o},u}\left( k,i_k \right) =\left[ e^{-j\zeta _{\mathrm{o},l}}\bm{\mathcal{V}}_{\mathrm{o},u}^{1}\left( k,i_k \right) ^{\mathrm{T}},\cdots \right.  
\\
&\quad \left. ,e^{-js\zeta _{\mathrm{o},l}}\bm{\mathcal{V}}_{\mathrm{o},u}^{s}\left( k,i_k \right) ^{\mathrm{T}},\cdots ,e^{-jS_{\mathrm{o},l}\zeta _{\mathrm{o},l}}\bm{\mathcal{V}}_{\mathrm{o},u}^{S_{\mathrm{o},l}}\left( k,i_k \right) ^{\mathrm{T}} \right] ^{\mathrm{T}},\notag
\end{align}
where the $e^{-js\zeta _{\mathrm{o},l}}$ with $\zeta _{\mathrm{o},l}=\frac{M_{\mathrm{o},l}-1}{M_{\mathrm{o},l}}\pi $ being the BC coefficient.
\par Finally, the codeword $\bm{\mathcal{W}}\left( k,i_k,j_k \right) $ is obtained as
\begin{align}
&\bm{\mathcal{W}}\left( k,i_k,j_k \right) =\left[ \bm{\mathcal{W}}_{\mathrm{1}}\left( k,i_k,j_k \right) ^{\mathrm{T}},\bm{\mathcal{E}}^{\mathrm{T}},\bm{\mathcal{E}}^{\mathrm{T}},\bm{\mathcal{E}}^{\mathrm{T}} \right] ^{\mathrm{T}}\notag
\\
&\quad=\!\left[ \left( \bm{\mathcal{V}}_{\mathrm{o},u}\left( k,i_k \right) \otimes \bm{\mathcal{V}}_{\mathrm{o},v}\left( k,j_k \right) \right) ^{\mathrm{T}},\bm{\mathcal{E}}^{\mathrm{T}},\bm{\mathcal{E}}^{\mathrm{T}},\bm{\mathcal{E}}^{\mathrm{T}} \right] ^{\mathrm{T}},
\end{align}
where $\bm{\mathcal{V}}_{\mathrm{o},v}\left( k,j_k \right)$ is designed similar to $\bm{\mathcal{V}}_{\mathrm{o},u}\left( k,i_k \right)$.
\par Based on the above procedures, the designed hierarchical codebook can flexibly adjust the direction and width of the IRS beam.

\subsection{3D HBT based on a DSP detector}
\begin{figure}[htbp]
  \centering
  \includegraphics[width=3.4in]{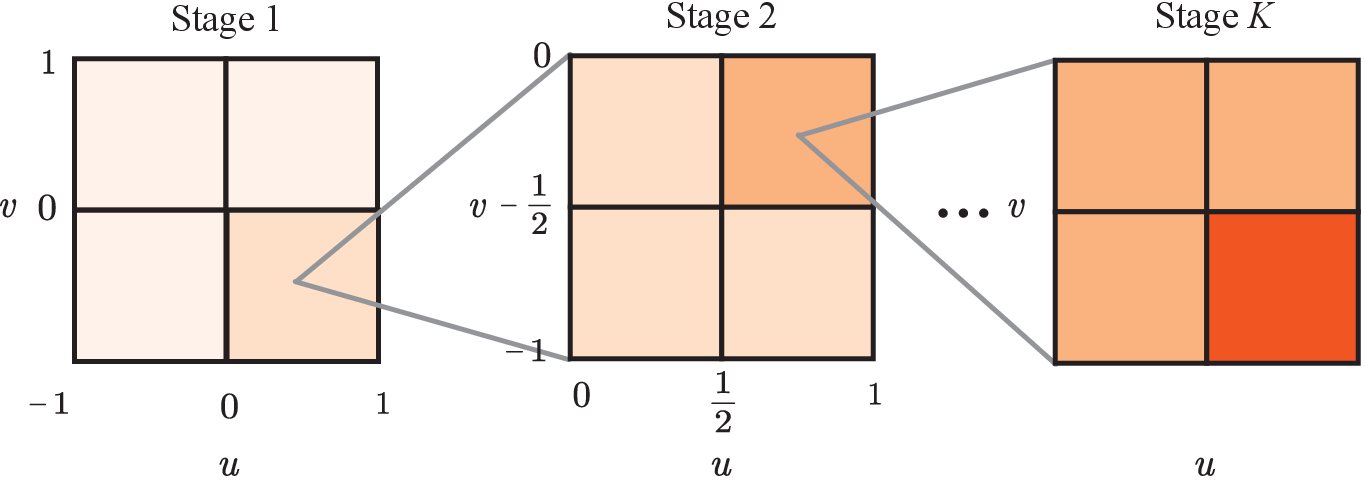}
  \caption{Illustration of 3D HBT.}
  \label{Beam search}
\end{figure}
\par As illustrated in Fig.~\ref{Beam search}, the 3D HBT contains $K$ stages, during each of which $4$ IRS beams are trained. Specifically, in stage $1$, the IRS forms $4$ beams in sequence to scan the entire angle range, based on the $4$ codewords of the first layer, where the duration of each beam is $T_{\mathrm{O}}$. Based on the received echo signals corresponding to the $4$ beams, the DFBS determines the presence of the target within the $4$ beam coverages based on a test statistic. In particular, the target is judged to be present in the coverage area of the beam that maximizes the test statistic. In stage $k$ ($k>1$), the beam with the largest test statistic in stage $(k-1)$ is selected, and its coverage will be equally divided into $4$ parts. Then, the IRS forms $4$ training beams in sequence to cover these $4$ parts based on the $4$ codewords of the codebook's $k$-th layer, and the DFBS detects the target as before. 

\par Below we will provide the details for selecting the test statistic and detecting the target, respectively. 
\subsubsection{Test statistic selection}
\par For target detection, the received signal strength (RSS) is a widely adopted test statistic, which is described by
\begin{align}\label{RSS}
\mathcal{R} _l=\sum_{n=0}^{N-1}{\left| y_{\mathrm{B},n,l} \right|^2}.
\end{align}
\par \black{However, for the HBT scheme that starts at wide beams with low beamforming gain, the echo signals are overwhelmed by noise in the earlier training stage. To cope with this issue, we propose a new test statistic, i.e., the DSP, for exploiting the correlation among multiple SCs and achieving effective noise suppression, which is defined as}
\begin{align}
\mathcal{P} _l=\underset{n_{\mathrm{Q}}\in \mathcal{N} _{\mathrm{Q}}}{\max}\frac{1}{N}\left| \sum_{n=0}^{N-1}{\frac{y_{\mathrm{B},n,l}}{s_{n,l}}e^{j2\pi n\varDelta f\tau _{n_{\mathrm{Q}}}}} \right|^2,
\end{align}
where 
\begin{align}
\tau _{n_{\mathrm{Q}}}=\frac{n_{\mathrm{Q}}}{N_{\mathrm{Q}}}\frac{1}{\varDelta f},n_{\mathrm{Q}}\in \mathcal{N} _{\mathrm{Q}}=\left\{ 0,\cdots ,\left( N_{\mathrm{Q}}-1 \right) \right\} .
\end{align}
\begin{proposition}
With uniform power allocation among all SCs, using the proposed DSP test statistic for echo signal processing yields $N$ times signal processing gain compared with the RSS test statistic.
\end{proposition}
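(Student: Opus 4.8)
The plan is to reduce both candidate test statistics to the same ``signal-plus-noise'' canonical form and compare their output SNRs at the matched delay bin. First I would simplify the echo model \eqref{yBS}. Because $\lambda$ is constant across the band, the array-response vectors in $\mathbf{G}_{\mathrm{B2R},n}$ and $\mathbf{g}_{\mathrm{R2G},n}$ are frequency-independent, so the only $n$-dependence of $\bar{\mathbf G}_n(\boldsymbol\xi(l))$ is carried by the product of complex path gains $\alpha_{\mathrm{B2R},n}^2\alpha_{\mathrm{R2G},n}^2$, which contributes the round-trip delay phase $e^{-j2\pi n\varDelta f\tau_{\mathrm{RT}}}$ with $\tau_{\mathrm{RT}}\triangleq 2(d_{\mathrm{B2R}}+d_{\mathrm{R2G}})/c$. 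Gathering every remaining $n$-invariant factor --- the RCS $\alpha_{\mathrm G}$, the cascaded path loss $a_{\mathrm{B2R}}^2a_{\mathrm{R2G}}^2$, the fixed DFBS beamforming gains (since $\mathbf{w}_{\mathrm t,n}=\mathbf{w}_{\mathrm r,n}=\mathbf{w}_{\mathrm B}^{\mathrm{opt}}$ for all $n$), the Doppler term $e^{j2\pi lT_{\mathrm O}f_{\mathrm D}}$, and the carrier phase $e^{-j2\pi f_c\tau_{\mathrm{RT}}}$ --- into a single scalar $\beta_l$, I obtain $y_{\mathrm B,n,l}=\sqrt{p_{\mathrm T,n}}\,s_{n,l}\,\beta_l\,e^{-j2\pi n\varDelta f\tau_{\mathrm{RT}}}+\tilde n_{n,l}$, where $\tilde n_{n,l}\triangleq\mathbf{w}_{\mathrm r,n}^{\mathrm T}\mathbf{n}_{n,l}\sim\mathcal{CN}(0,\sigma_0^2)$ because $\|\mathbf{w}_{\mathrm r,n}\|=1$.

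Next I would evaluate the two statistics under uniform allocation $p_{\mathrm T,n}=p_{\mathrm T}$. For the DSP statistic, note that $|s_{n,l}|=1$, so dividing by $s_{n,l}$ preserves the circular symmetry and variance of the noise; moreover, assuming $\tau_{\mathrm{RT}}$ lies on the delay grid (i.e.\ $\tau_{\mathrm{RT}}=\tau_{n_{\mathrm Q}^\star}$ for some $n_{\mathrm Q}^\star\in\mathcal N_{\mathrm Q}$), in that bin the signal terms add \emph{coherently} to $N\sqrt{p_{\mathrm T}}\,\beta_l$, while the delay-compensated noise $\sum_{n}(\tilde n_{n,l}/s_{n,l})e^{j2\pi n\varDelta f\tau_{n_{\mathrm Q}^\star}}$ is a sum of $N$ i.i.d.\ $\mathcal{CN}(0,\sigma_0^2)$ variables, hence $\mathcal{CN}(0,N\sigma_0^2)$. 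Applying $\tfrac1N|\cdot|^2$ and taking expectation gives a signal component $Np_{\mathrm T}|\beta_l|^2$ over a noise floor $\sigma_0^2$, i.e.\ an output SNR of $Np_{\mathrm T}|\beta_l|^2/\sigma_0^2$. For the RSS statistic the combining is \emph{non-coherent}: $\mathbb{E}\{|y_{\mathrm B,n,l}|^2\}=p_{\mathrm T}|\beta_l|^2+\sigma_0^2$ per subcarrier (the cross term being zero-mean), so $\mathbb{E}\{\mathcal R_l\}=Np_{\mathrm T}|\beta_l|^2+N\sigma_0^2$, a signal component $Np_{\mathrm T}|\beta_l|^2$ over a noise floor $N\sigma_0^2$, i.e.\ an output SNR of $p_{\mathrm T}|\beta_l|^2/\sigma_0^2$. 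The ratio of the two output SNRs is then exactly $N$, which is the asserted signal processing gain.

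The main obstacle is making ``signal processing gain'' precise enough that the claim is both correct and meaningful. The delicate point is that $\mathcal P_l$ takes a maximum over $N_{\mathrm Q}$ delay bins, so under the target-absent hypothesis its effective noise floor is inflated relative to a single bin; the clean $N\times$ statement therefore pertains to the output SNR \emph{at the correct delay bin} --- equivalently, the peak-to-average level of the delay spectrum --- rather than to the raw detection threshold, and I would flag this scope. I would also double-check the one modeling assumption the argument rests on, namely that $\bar{\mathbf G}_n(\boldsymbol\xi(l))$ is $n$-invariant apart from the round-trip delay phase; over the narrow OFDM bandwidth the frequency selectivity of $a_{\mathrm{B2R}}$ and $a_{\mathrm{R2G}}$ is negligible, so the coherent sum suffers no appreciable mismatch loss, and I would state this as the working approximation.
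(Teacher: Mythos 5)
Your proposal is correct and follows essentially the same route as the paper's proof: rewrite the cascaded echo so the only subcarrier dependence is the round-trip delay phase, observe that at the matched delay bin the DSP combines the $N$ subcarriers coherently (signal power $\times N$, noise floor $\sigma_0^2$ after the $1/N$ normalization) while the RSS combines non-coherently, and take the ratio of output SNRs to get the factor $N$. Your extra remarks — that the argument needs $|s_{n,l}|=1$ for the noise statistics to be preserved and that the max over $N_{\mathrm{Q}}$ bins means the clean $N\times$ statement applies at the correct bin — are reasonable clarifications of assumptions the paper makes implicitly (it likewise evaluates the DSP at the peak bin $\hat{n}_{\mathrm{Q}}$), not a different method.
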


\begin{proof}
Based on (\ref{HB2I}), (\ref{hI2T}), and (\ref{yBS}), the DFBS received echo signals can be expressed as 
\begin{align}\label{ybsreformulate}
&y_{\mathrm{B},n,l}\!=\!\sqrt{p_{\mathrm{T},n}}\alpha _{\mathrm{G}}\mathbf{w}_{\mathrm{B}}^{\mathrm{opt},\mathrm{T}}\bar{\mathbf{G}}_n\left( \boldsymbol{\xi }\left( l \right) \right) \mathbf{w}_{\mathrm{B}}^{\mathrm{opt}}s_{n,l}e^{j2\pi lT_{\mathrm{O}}f_{\mathrm{D}}}\!+\!\bar{n}_{n,l}\notag
\\
&\quad=\sqrt{p_{\mathrm{T},n}}\alpha _{\mathrm{CG},l}e^{j\beta _{\mathrm{CG},l}}e^{-j2\pi n\varDelta f\left( \tau _0+\tau \right)}s_{n,l}+\bar{n}_{n,l},
\end{align}
where 
\begin{align}
&\alpha _{\mathrm{CG},l}\triangleq N_{\mathrm{B}}\alpha _{\mathrm{G}}a_{\mathrm{R}2\mathrm{G}}^{2}a_{\mathrm{B}2\mathrm{R}}^{2}\left| h_{\boldsymbol{\xi },l}^{2} \right|,\label{alphaCG}
\\
&\beta _{\mathrm{CG},l}\triangleq \mathrm{angle}\left( h_{\boldsymbol{\xi },l}^{2} \right) +2\pi \left( lT_{\mathrm{O}}f_{\mathrm{D}}\!-\!f_c\left( \tau _0+\tau  \right) \right) ,
\\
&\tau _0\triangleq 2d_{\mathrm{B}2\mathrm{R}}/c,\quad\tau \triangleq 2d_{\mathrm{R}2\mathrm{G}}/c,\quad\bar{n}_{n,l}\triangleq \mathbf{w}_{\mathrm{B}}^{\mathrm{opt},\mathrm{T}}\mathbf{n}_{n,l},
\end{align}
where $h_{\boldsymbol{\xi },l}=\mathbf{b}_{\mathrm{I}}^{\mathrm{H}}\left( u_{\mathrm{R}2\mathrm{G}}^{\mathrm{D}},v_{\mathrm{R}2\mathrm{G}}^{\mathrm{D}} \right) \mathrm{diag}\left( \boldsymbol{\xi }\left( l \right) \right) \mathbf{b}_{\mathrm{I}}\left( u_{\mathrm{B}2\mathrm{R}}^{\mathrm{A}},v_{\mathrm{B}2\mathrm{R}}^{\mathrm{A}} \right)  $. Then, let $\mathbf{f}_l$ denote an auxiliary vector, whose $n$-th element is given by
\begin{align}\label{fl}
\left[ \mathbf{f}_l \right] _n&=e^{j2\pi n\varDelta f\tau _0}\frac{y_{\mathrm{B},n,l}}{s_{n,l}}\!
\\
&=\sqrt{p_{\mathrm{T},n}}\alpha _{\mathrm{CG},l}e^{j\beta _{\mathrm{CG},l}}e^{-j2\pi n\varDelta f\tau }+e^{j2\pi n\varDelta f\tau _0}\frac{\bar{n}_{n,l}}{s_{n,l}},\notag
\end{align}
Next, define the quantified delay as
\begin{align}
\tau _{n_{\mathrm{Q}}}\triangleq \frac{n_{\mathrm{Q}}}{N_{\mathrm{Q}}}\frac{1}{\varDelta f},n_{\mathrm{Q}}\in \mathcal{N} _{\mathrm{Q}}\triangleq \left\{ 0,\cdots ,\left( N_{\mathrm{Q}}-1 \right) \right\} ,
\end{align}
and calculate the delay spectrum $\boldsymbol{\varGamma }_l$ as
\begin{align}\label{delayspectrum}
\left[ \boldsymbol{\varGamma }_l \right] _{n_{\mathrm{Q}}}=\frac{1}{N}\left| \sum_{n=0}^{N-1}{\left[ \mathbf{f}_l \right] _ne^{j2\pi n\varDelta f\tau _{n_{\mathrm{Q}}}}} \right|^2.
\end{align}
Note that, when $N_{\mathrm{Q}}$ is enough large, there exists a $\tau _{n_{\mathrm{Q}}}$ that satisfies
\begin{align}
\tau _{\hat{n}_{\mathrm{D}}}=\underset{\tau _{n_{\mathrm{Q}}}}{\mathrm{arg}}\,\,\,\underset{n_{\mathrm{Q}}\in \mathcal{N} _{\mathrm{Q}}}{\min}\left| \tau  -\tau _{n_{\mathrm{Q}}} \right|\approx \tau  ,
\end{align}
which makes
\begin{align}
e^{j2\pi n\varDelta f\tau _{\hat{n}_{\mathrm{Q}}}}e^{-j2\pi n\varDelta f\tau }\approx 1,
\end{align}
and the $\boldsymbol{\varGamma }_l$ meet its peak value at $\left[ \boldsymbol{\varGamma }_l \right] _{\hat{n}_{\mathrm{Q}}}$. Hence, we express the DSP as (\ref{DSP_ori}) at the top of the next page.
\begin{figure*}[t]
\begin{equation}
 \begin{gathered}
  \begin{aligned}
    \mathcal{P} _l=\left[ \boldsymbol{\varGamma }_l \right] _{\hat{n}_{\mathrm{Q}}}\approx \left| \frac{1}{\sqrt{N}}\sum_{n=0}^{N-1}{\sqrt{p_{\mathrm{T},n}}\alpha _{\mathrm{CG},l}e^{j\beta _{\mathrm{CG},l}}}+\frac{1}{\sqrt{N}}\sum_{n=0}^{N-1}{\frac{\bar{n}_{n,l}}{s_{n,l}}e^{j2\pi n\varDelta f\left( \tau _{\hat{n}_{\mathrm{Q}}}+\tau _0 \right)}} \right|^2.\label{DSP_ori}
 \end{aligned}
 \end{gathered}
\end{equation}
\begin{equation}
 \begin{gathered}
  \begin{aligned}
    \mathcal{R} _l=\sum_{n=0}^{N-1}{\left| \sqrt{p_{\mathrm{T},n}}\alpha _{\mathrm{CG},l}e^{j\left( \beta _{\mathrm{CG},l}-2\pi n\varDelta f\left( \tau _0+\tau  \right) \right)}s_{n,l}+\bar{n}_{n,l} \right|^2}.\label{RSS2}
 \end{aligned}
 \end{gathered}
\end{equation}
\hrulefill
\vspace{-3mm}
\end{figure*}

\par To facilitate the comparison, based on (\ref{ybsreformulate}), we reformulate the RSS in (\ref{RSS}) as (\ref{RSS2}) at the top of the next page.
Since the noise terms of both $\mathcal{P} _l$ and $\mathcal{R} _l$
follow $\mathcal{C} \mathcal{N} \left( 0,\sigma _{0}^{2} \right) $, when each SC allocated with the power of $\bar{p}_{\mathrm{T}}$, the SNRs of the DSP and RSS test statistics can be respectively described by
\begin{align}
&\mathrm{snr}_{\mathrm{DSP},l}=\frac{N\bar{p}_{\mathrm{T}}\alpha _{\mathrm{CG},l}^{2}}{\sigma _{0}^{2}},
\\&\mathrm{snr}_{\mathrm{RSS},l}=\frac{\bar{p}_{\mathrm{T}}\alpha _{\mathrm{CG},l}^{2}}{\sigma _{0}^{2}},
\end{align}
which verifies that using the proposed DSP test statistic rather than the RSS test statistic obtains $N$ times SNR gain.
\end{proof}

\subsubsection{Target detection}
We design a DSP detector to determine the target presence/absence in the beam scanning area. Specifically, we first devise a binary hypothesis testing based on the DSP, which is formulated as
\begin{align}
\begin{array}{c}
	H_1:\mathcal{P} _l>\delta  ,\qquad H_0:\mathcal{P} _l\leqslant \delta  ,\\
\end{array}
\end{align}
where $H_1$/$H_0$ is the hypothesis of target presence/absence in the beam coverage, $\delta $ denotes the false alarm threshold.
\par Then, we design $\delta $ for a given false alarm rate (FAR). Under the hypothesis $H_0$, the DFBS received signals is given by
\begin{align}\label{ynotg}
y_{\mathrm{B},n,l}=\mathbf{w}_{\mathrm{B}}^{\mathrm{opt},\mathrm{T}}\mathbf{n}_{n,l}.
\end{align}
Combining (\ref{DSP_ori}) and (\ref{ynotg}), the delay spectrum $\boldsymbol{\varGamma }_l$ can be expressed as
\begin{align}
\left[ \boldsymbol{\varGamma }_l \right] _{n_{\mathrm{Q}}}=\left| \frac{1}{\sqrt{N}}\sum_{n=0}^{N-1}{\frac{\mathbf{w}_{\mathrm{B}}^{\mathrm{opt},\mathrm{T}}\mathbf{n}_{n,l}}{s_{n,l}}e^{j2\pi n\varDelta f\left( \tau _{n_{\mathrm{Q}}}+\tau _0 \right)}} \right|^2.
\end{align}
Since the inner term of $\left| \cdot \right|^2$ follows $\mathcal{C} \mathcal{N} \left( 0,\sigma _{0}^{2} \right) $, each $\left[ \boldsymbol{\varGamma }_l \right] _{n_{\mathrm{Q}}}$ follows an exponential distribution
\begin{align}
f\left( z|H_0 \right) =\frac{1}{\sigma _{0}^{2}}e^{-\frac{z}{\sigma _{0}^{2}}},z>0.
\end{align}
Hence, each $\left[ \boldsymbol{\varGamma }_l \right] _{n_{\mathrm{Q}}}$ has the FAR of
\begin{align}\label{PFAE}
p_\mathrm{FAR}&=\mathbb{P} \left( \left[ \boldsymbol{\varGamma }_l \right] _{n_{\mathrm{Q}}}>\delta  |H_0 \right) =\int_{\delta }^{\infty}{f\left( z|H_0 \right) dz}\notag
\\
&=\int_{\delta }^{\infty}{\frac{1}{\sigma _{0}^{2}}e^{-\frac{z}{\sigma _{0}^{2}}}dz}=e^{-\frac{\delta }{\sigma _{0}^{2}}}.
\end{align}
Note that, when $N_{\mathrm{Q}}=N$, the FAR of the detector is \cite{skolnik2008radar}
\begin{align}\label{PFA}
\bar{p}_\mathrm{FAR}=1-\left( 1-p_\mathrm{FAR} \right) ^N.
\end{align}
At last, combining (\ref{PFA}) and (\ref{PFAE}), the false alarm threshold of the DSP detector with  $\bar{p}_\mathrm{FAR}$ FAR is designed as
\begin{align}\label{Kappa}
\delta  =-\sigma _{0}^{2}\ln \left( 1-\sqrt[N]{1-\bar{p}_\mathrm{FAR}} \right) .
\end{align}
\subsection{BR for super-resolution direction estimation}
For the 3D HBT, its beam training result is considered to be the IRS beam with the maximum DSP in the layer $K$, i.e.,
\begin{align}
\boldsymbol{\xi }\left( \tilde{i}_K,\tilde{j}_K \right) =\mathbf{b}_{\mathrm{I}}\left( \frac{2\tilde{i}_K-1}{M}-1,\frac{2\tilde{j}_K-1}{M}-1 \right) .
\end{align}
and we determine that the target present in the coverage of $\boldsymbol{\xi }\left( \tilde{i}_K,\tilde{j}_K \right) $.
\black{However, the pre-defined codebook for the HBT has a finite resolution of $2/M$, which limits the accuracy of the target direction estimation. To cope with this issue, we utilize the 
linear interpolation algorithm \cite{hildebrand1987introduction,press2007numerical} to refine the direction of the IRS beam and achieve super-resolution direction estimation, which is called the BR approach.}

\begin{figure}[htbp]
  \centering
   \includegraphics[width=2.7in]{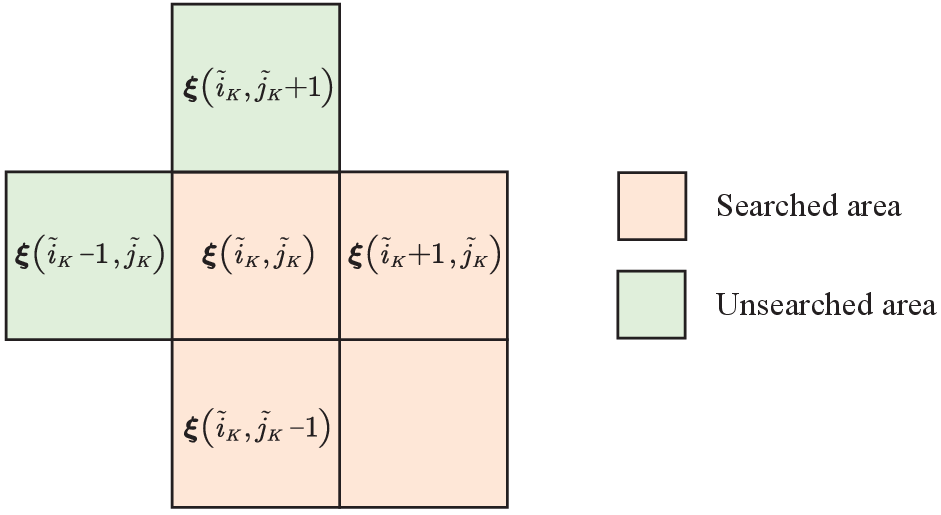}
  \caption{An illustration of BR.}
  \label{Beamrefinement}
\end{figure}
\par As portrayed in Fig.~\ref{Beamrefinement}, the IRS beam $\boldsymbol{\xi }\left( \tilde{i}_K,\tilde{j}_K \right) $ has $4$ adjacent areas which belong to the IRS beam set $\varPsi \triangleq \left\{ \boldsymbol{\xi }(\tilde{i}_K+\varDelta _i,\tilde{j}_K+\varDelta _j)|\varDelta _i,\varDelta _j\in \left\{ -1,1 \right\} \right\} $, where $2$ areas have already been searched in the last stage of the HBT while the other $2$ areas have not. In the BR approach, we first scan the $2$ unsearched areas in sequence and calculate their corresponding DSPs. Then, by using the linear interpolation algorithm, the azimuth/elevation index can be updated as
\begin{align}
\small{i_{\mathrm{BR}}=\frac{\sum\nolimits_{\varDelta _i=-1}^1{\left( \tilde{i}_K+\varDelta _i \right) \mathcal{P} \left( \tilde{i}_K+\varDelta _i,\tilde{j}_K \right)}}{\sum\nolimits_{\varDelta _i=-1}^1{\mathcal{P} \left( \tilde{i}_K+\varDelta _i,\tilde{j}_K \right)}},}
\\
\small{j_{\mathrm{BR}}=\frac{\sum\nolimits_{\varDelta _j=-1}^1{\left( \tilde{j}_K+\varDelta _j \right) \mathcal{P} \left( \tilde{i}_K,\tilde{j}_K+\varDelta _j \right)}}{\sum\nolimits_{\varDelta _j=-1}^1{\mathcal{P} \left( \tilde{i}_K,\tilde{j}_K+\varDelta _j \right)}}.}
\end{align}
\par As such, the IRS beam is refined to be
\begin{align}\label{IRSfinal}
\boldsymbol{\xi }_{\mathrm{BR}}=\mathbf{b}_{\mathrm{I}}\left( \frac{2i_{\mathrm{BR}}-1}{M}-1,\frac{2j_{\mathrm{BR}}-1}{M}-1 \right) .
\end{align}
Recall from (\ref{anglerelation}) that $\boldsymbol{\xi }_{\mathrm{BR}}$ maximizes the beam gain along the direction 
\begin{align}
\small{\left( \frac{2i_{\mathrm{BR}}-1}{M}-1+u_{\mathrm{B}2\mathrm{R}}^{\mathrm{A}}\,\,,\frac{2j_{\mathrm{BR}}-1}{M}-1+v_{\mathrm{B}2\mathrm{R}}^{\mathrm{A}} \right) ,}
\end{align}
which is estimated as the target direction $\left( \hat{u}_{\mathrm{R}2\mathrm{G}}^{\mathrm{D}},\hat{v}_{\mathrm{R}2\mathrm{G}}^{\mathrm{D}} \right) $.
\vspace{3mm}
\par \black{\textbf{Complexity analysis: }In the proposed IRS 3D beam training scheme, the IRS first conducts 3D HBT including $K=\log _2M$ training stages, during each of which $4$ IRS beams are trained. Then, the BR is conducted, where the IRS performs beamforming to scan the $2$ unsearched areas around the beam training result of 3D HBT. Hence, the proposed beam training scheme requires the implementation overhead of $\left( 2+4\log _2M \right) $ OFDM symbols, which is much lower than the overhead of the EBS scheme (i.e., $M^2$). Moreover, during the training duration of each IRS beam, the DFBS calculates the DSP of the received echo signals, where the corresponding delay spectrum has $N_{\mathrm{Q}}$ elements, each of which involves $\mathcal{O} \left( N \right) $ computational complexity. Hence, the computational complexity of the proposed beam training scheme is $\mathcal{O} \left( \left( 2+4\log _2M \right) N_{\mathrm{Q}}N \right) $.}
\color{black}
\vspace{-2mm}
\subsection{Performance Bound for direction estimation}
Based on (\ref{hI2T}) and (\ref{yBS}), the DFBS received echo signals during the CGS period can be expressed as
\begin{align}
y_{\mathrm{B},n,l}=\sqrt{p_{\mathrm{T},n}}\kappa _lB_{n,l}\left( u_{\mathrm{R}2\mathrm{G}}^{\mathrm{D}},v_{\mathrm{R}2\mathrm{G}}^{\mathrm{D}} \right) s_{n,l}+\bar{n}_{n,l},
\end{align}
where
\begin{align}
&\kappa_l= \alpha _{\mathrm{G}}\alpha _{\mathrm{R}2\mathrm{G},n}^{2}e^{j2\pi lT_{\mathrm{O}}f_{\mathrm{D}}},
\\
&B_{n,l}\left( u_{\mathrm{R}2\mathrm{G}}^{\mathrm{D}},v_{\mathrm{R}2\mathrm{G}}^{\mathrm{D}} \right) =\mathbf{C}_{n,l}^{\mathrm{T}}\mathbf{C}_{n,l},
\\
&\mathbf{C}_{n,l}=\mathbf{b}_{\mathrm{I}}^{\mathrm{H}}\left( u_{\mathrm{R}2\mathrm{G}}^{\mathrm{D}},v_{\mathrm{R}2\mathrm{G}}^{\mathrm{D}} \right) \mathrm{diag}\left( \boldsymbol{\xi }\left( l \right) \right) \mathbf{G}_{\mathrm{B}2\mathrm{R},n}\mathbf{w}_{\mathrm{B}}^{\mathrm{opt}}.
\end{align}
\par We denote $\boldsymbol{\varPsi }=\left[ \boldsymbol{\eta }^{\mathrm{T}},\boldsymbol{\kappa }_{l}^{\mathrm{T}} \right] ^{\mathrm{T}}\in \mathbb{R} ^{4\times 1}$ as the vector of unknown parameters to be estimated, where $\boldsymbol{\eta }=\left[ u_{\mathrm{R}2\mathrm{G}}^{\mathrm{D}},v_{\mathrm{R}2\mathrm{G}}^{\mathrm{D}} \right] ^{\mathrm{T}}$ and $\boldsymbol{\kappa }_l=\left[ \mathrm{Re}\left\{ \kappa _l \right\} ,\mathrm{Im}\left\{ \kappa _l \right\} \right] ^{\mathrm{T}}$.
\par Then, we stack the DFBS received echo signals on all SCs during the training duration of the $l$-th training beam into an $N\times1$ vector given by
\begin{align}
\mathbf{y}_{\mathrm{CG},l}&= \left[ \begin{array}{c}
	\sqrt{p_{\mathrm{T},1}}\kappa _lB_l\left( u_{\mathrm{R}2\mathrm{G}}^{\mathrm{D}},v_{\mathrm{R}2\mathrm{G}}^{\mathrm{D}} \right) s_{1,l}\\
	\vdots\\
	\sqrt{p_{\mathrm{T},N}}\kappa _lB_l\left( u_{\mathrm{R}2\mathrm{G}}^{\mathrm{D}},v_{\mathrm{R}2\mathrm{G}}^{\mathrm{D}} \right) s_{N,l}\\
\end{array} \right] +\left[ \begin{array}{c}
	\bar{n}_{1,l}\\
	\vdots\\
	\bar{n}_{N,l}\\
\end{array} \right] \notag
\\
&\triangleq \bar{\mathbf{r}}_l+\bar{\mathbf{n}}_l.
\end{align}
where $\bar{\mathbf{n}}_l\sim \mathcal{C} \mathcal{N} \left( \mathbf{0},\mathbf{\Sigma } \right) $ with $\mathbf{\Sigma }=\sigma _{0}^{2}\mathbf{I}_N$.
\par Next, we define $\mathbf{J}\in \mathbb{R} ^{4\times 4}$ as the Fisher information matrix (FIM) for estimating $\boldsymbol{\varPsi }$, whose $(i,j)$-th element is \cite{kay1993fundamentals}
\begin{align}\label{FIM}
\left[ \mathbf{J} \right] _{i,j}&=\mathrm{tr}\left( \mathbf{\Sigma }^{-1}\frac{\partial \mathbf{\Sigma }}{\partial \boldsymbol{\varPsi }_i}\mathbf{\Sigma }^{-1}\frac{\partial \mathbf{\Sigma }}{\partial \boldsymbol{\varPsi }_j} \right) +2\mathrm{Re}\left\{ \frac{\partial \bar{\mathbf{r}}_{l}^{\mathrm{H}}}{\partial \boldsymbol{\varPsi }_i}\mathbf{\Sigma }^{-1}\frac{\partial \bar{\mathbf{r}}_l}{\partial \boldsymbol{\varPsi }_j} \right\} \notag
\\
&=\frac{2}{\sigma _{0}^{2}}\mathrm{Re}\left\{ \frac{\partial \bar{\mathbf{r}}_{l}^{\mathrm{H}}}{\partial \boldsymbol{\varPsi }_i}\frac{\partial \bar{\mathbf{r}}_l}{\partial \boldsymbol{\varPsi }_j} \right\} ,i,j\in \left\{ 1,2,3,4 \right\} .
\end{align}
\par Finally, following the derivation in \cite{1703855}, we write $\mathbf{J}$  as
\begin{align}
\mathbf{J}=\left[ \begin{matrix}
	\mathbf{J}_{\boldsymbol{\eta \eta }}&		\mathbf{J}_{\boldsymbol{\eta \kappa }_l}\\
	\mathbf{J}_{\boldsymbol{\kappa }_l\boldsymbol{\eta }}&		\mathbf{J}_{\boldsymbol{\kappa }_l\boldsymbol{\kappa }_l}\\
\end{matrix} \right] ,
\end{align}
and the CRLB for estimating $\boldsymbol{\eta }$ can be calculated as 
\begin{align}
\mathrm{CRLB}\left( \boldsymbol{\eta } \right) =\left[ \mathbf{J}_{\boldsymbol{\eta \eta }}-\mathbf{J}_{\boldsymbol{\eta \kappa }_l}\mathbf{J}_{\boldsymbol{\kappa }_l\boldsymbol{\kappa }_l}^{-1}\mathbf{J}_{\boldsymbol{\kappa }_l\boldsymbol{\eta }}^{\mathrm{T}} \right] ^{-1}.
\end{align}
\par Since the estimation of horizontal and vertical directions (i.e., $u_{\mathrm{R}2\mathrm{G}}^{\mathrm{D}}$ and $v_{\mathrm{R}2\mathrm{G}}^{\mathrm{D}}$) are decoupled \cite{ShaoXiaodanTarget}, their estimation performance can be separately analyzed, following the similar method in \cite{SongXianxinIntelligent}.
\color{black}
\section{Target R\&V Estimation via DD Estimation}\label{section4}
\black{During the FGS period, we estimate the R\&V of the NLOS target, by extracting DD information from the IRS reflected target echo signals, where the IRS beamforming is configured as the beam training result $\boldsymbol{\xi }_{\mathrm{BR}}$ obtained via the CGS, which maximizes the SNR of the DFBS received echo signals.}
\par According to (\ref{HB2I}), (\ref{hI2T}), and (\ref{yBS}), we pack the DFBS received echo signals during the FGS period into an $N\times T_{\mathrm{FG}}$-matrix, whose $(n,l)$-th element is described by (\ref{YFS}) at the top of this page,
\begin{figure*}[t]
\begin{equation}\label{YFS}
 \begin{gathered}
  \begin{aligned}
\left[ \mathbf{Y}_{\mathrm{FG}} \right] _{n,l}&=\sqrt{p_{\mathrm{T},n}}\alpha _{\mathrm{G}}\mathbf{w}_{\mathrm{B}}^{\mathrm{opt},\mathrm{T}}\bar{\mathbf{G}}_n\left( \mathrm{diag}\left( \boldsymbol{\xi }_{\mathrm{BR}} \right) \right) \mathbf{w}_{\mathrm{B}}^{\mathrm{opt}}s_{n,\left( l+T_{\mathrm{CG}} \right)}e^{j2\pi \left( l+T_{\mathrm{CG}} \right) T_{\mathrm{O}}f_{\mathrm{D}}}+\bar{n}_{n,l}
\\
&=\sqrt{p_{\mathrm{T},n}}\alpha _{\mathrm{FG}}e^{j\beta _{\mathrm{FG}}}e^{-j2\pi n\varDelta f\left( \tau _0+\tau  \right)}s_{n,\left( l+T_{\mathrm{CG}} \right)}e^{j2\pi lT_{\mathrm{O}}f_{\mathrm{D}}}+\bar{n}_{n,l},n\in \mathcal{N} ,l\in \mathcal{L} _{\mathrm{FG}}\triangleq \left\{ 0,\cdots ,T_{\mathrm{FG}}-1 \right\} ,
 \end{aligned}
 \end{gathered}
\end{equation}
\hrulefill
\vspace{-3mm}
\end{figure*}
where 
\begin{align}
&\alpha _{\mathrm{FG}}\triangleq N_{\mathrm{B}}\alpha _{\mathrm{G}}a_{\mathrm{R}2\mathrm{G}}^{2}a_{\mathrm{B}2\mathrm{R}}^{2}\left| h_{\boldsymbol{\xi }_{\mathrm{BR}}}^{2} \right|,
\\
&\beta _{\mathrm{FG}}\triangleq \mathrm{angle}\left( h_{\boldsymbol{\xi }_{\mathrm{BR}}}^{2}\! \right) +2\pi \left( T_{\mathrm{CG}}T_{\mathrm{O}}f_{\mathrm{D}}-f_c\left( \tau _0+\tau  \right) \right) ,
\end{align}
where $h_{\boldsymbol{\xi }_{\mathrm{BR}}}\!=\!\mathbf{b}_{\mathrm{I}}^{\mathrm{H}}\left( u_{\mathrm{R}2\mathrm{G}}^{\mathrm{D}},v_{\mathrm{R}2\mathrm{G}}^{\mathrm{D}} \right) \mathrm{diag}\left( \boldsymbol{\xi }_{\mathrm{BR}} \right) \mathbf{b}_{\mathrm{I}}\left( u_{\mathrm{B}2\mathrm{R}}^{\mathrm{A}},v_{\mathrm{B}2\mathrm{R}}^{\mathrm{A}} \right) $. 
\par Then, we define an auxiliary matrix $\mathbf{F}\in \mathbb{C} ^{N\times T_{\mathrm{FG}}}$, whose $(n,l)$-th element is described by
\begin{align}
&\left[ \mathbf{F} \right] _{n,l}=e^{j2\pi n\varDelta f\tau _0}\frac{\left[ \mathbf{Y}_{\mathrm{FG}} \right] _{n,l}}{s_{n,\left( l+T_{\mathrm{CG}} \right)}}
\\
&=\sqrt{p_{\mathrm{T},n}}\alpha _{\mathrm{FG}}e^{j\beta _{\mathrm{FG}}}e^{-j2\pi n\varDelta f\tau }e^{j2\pi lT_{\mathrm{O}}f_{\mathrm{D}}}\!+\!\frac{e^{j2\pi n\varDelta f\tau _0}\bar{n}_{n,l}}{s_{n,\left( l+T_{\mathrm{CG}} \right)}}.\notag
\end{align}
Below we leverage $\mathbf{F}$ for estimating the R\&V of the target.
\subsection{Maximum Likelihood Estimation}
First, stacking the unknown parameters into a parameter vector $\boldsymbol{\theta }$ yields
\begin{align}
\boldsymbol{\theta }=\left[ \alpha _{\mathrm{FG}},\beta _{\mathrm{FG}},\tau  ,f_D \right] ^{\mathrm{T}}.
\end{align}
\par Since the noise term in $\mathbf{F}$ is independent, the log-likelihood function is described by
\begin{align}
\ell \left( \mathbf{F}|\boldsymbol{\theta } \right) =\log \prod_{n=0}^{N-1}{\prod_{l=0}^{T_{\mathrm{FG}}-1}{\left( \frac{1}{\pi \sigma _{0}^{2}}e^{-\frac{D_{n,l}}{\sigma _{0}^{2}}} \right)}}\notag
\\
=\sum_{n=0}^{N-1}{\sum_{l=0}^{T_{\mathrm{FG}}-1}{\left( -\log \pi \sigma _{0}^{2}-\frac{D_{n,l}}{\sigma _{0}^{2}} \right)}},
\end{align}
where $D_{n,l}\!=\!\left| \left[ \mathbf{F} \right] _{n,l}\!-\!\sqrt{p_{\mathrm{T},n}}\alpha _{\mathrm{FG}}e^{j\left( 2\pi \left( lT_{\mathrm{O}}f_D\!-n\varDelta f\tau  \right) +\beta _{\mathrm{FG}} \right)} \right|^2$.
As such, the maximum likelihood estimation (MLE) of $\boldsymbol{\theta }$ is
\begin{align}
\hat{\boldsymbol{\theta}}_{\mathrm{ML}}=\,\,\underset{\boldsymbol{\theta }}{\mathrm{arg}\max}\,\,\ell \left( \mathbf{F}|\boldsymbol{\theta } \right) .
\end{align}
For the log-likelihood function, its first term $-\log \pi \sigma _{0}^{2}$ and the factor of the second term $1/\sigma _{0}^{2}$ are independent of ${\bm \theta}$ and have not affect on the function maximization. Hence, by ignoring $-\log \pi \sigma _{0}^{2}$ and $1/\sigma _{0}^{2}$, the objective function is given by
\begin{align}
&\sum_{n=0}^{N-1}\!{\sum_{l=0}^{T_{\mathrm{FG}}-1}{\!\!-\!D_{n,l}}}=\!\sum_{n=0}^{N-1}{\sum_{l=0}^{T_{\mathrm{FG}}-1}{\!\left( -p_{\mathrm{T},n}\alpha _{\mathrm{FG}}^{2}\!-\left| \left[ \mathbf{F} \right] _{n,l} \right|^2 \right.}}
\\
&\left.+2\alpha _{\mathrm{FG}}\sqrt{p_{\mathrm{T},n}}\mathrm{Re}\left\{ \left[ \mathbf{F} \right] _{n,l}e^{-j\left( 2\pi \left( lT_{\mathrm{O}}f_D-n\varDelta f\tau  \right) +\beta _{\mathrm{FG}} \right)} \right\} \right) .\notag
\end{align}
Similarly, we further ignore those terms that are independent of ${\bm \theta}$, and simplify the objective function to 
\begin{align}\label{likelihood}
\sum_{n=0}^{N-1}{\sum_{l=0}^{T_{\mathrm{FG}}-1}{\mathrm{Re}\left\{ \left[ \mathbf{F} \right] _{n,l}e^{-j\left( 2\pi \left( lT_{\mathrm{O}}f_D-n\varDelta f\tau  \right) +\beta_{\mathrm{FG}} \right)} \right\}}},
\end{align}
which is denoted as $\hat{\ell}\left( \mathbf{F}|\boldsymbol{\theta } \right)$, and $\hat{\boldsymbol{\theta}}_{\mathrm{ML}}$ can be obtained from
\begin{align}
\boldsymbol{\hat{\theta}}_{\mathrm{ML}}=\,\,\underset{\boldsymbol{\theta }}{\mathrm{arg}\max}\,\,\hat{\ell}\left( \mathbf{F}|\boldsymbol{\theta } \right) .
\end{align}
\subsection{DD Estimation}
\par For parameter estimation, we first reformulate (\ref{likelihood}) as
\begin{align}\label{relikelihood}
\hat{\ell}\left( \mathbf{F}|\boldsymbol{\theta } \right) =&\mathrm{Re}\left\{ e^{j\beta _{\mathrm{FG}}} \right. 
\\
&\times \left. \sum_{n=0}^{N-1}{\left( \sum_{l=0}^{T_{\mathrm{FG}}-1}{\!\left[ \mathbf{F} \right] _{n,l}e^{-j2\pi lT_{\mathrm{O}}f_D}} \right) e^{j2\pi n\varDelta f\tau }} \right\} .\notag
\end{align}
Observing that  (\ref{relikelihood}) exhibits a high degree of resemblance with the Discrete Fourier transform (DFT). Hence, by defining the delay quantization vector $\boldsymbol{\tau }$ and Doppler shift quantization vector $\boldsymbol{f}_{\mathrm{D}}$ as
\begin{align}
&\boldsymbol{\tau }=\frac{\tau _{\max}}{N_{\mathrm{R}}}\left[ 1,\cdots ,n_{\mathrm{D}},\cdots ,\left( N_{\mathrm{R}}-1 \right) \right] ,\label{Qtau}
\\
&\boldsymbol{f}_{\mathrm{D}}=\frac{f_{\mathrm{D},\max}}{T_{\mathrm{V}}}\left[ -T_{\mathrm{V}},\cdots ,l_{\mathrm{D}},\cdots ,\left( T_{\mathrm{V}}-1 \right) \right] ,
\label{Qfd}
\end{align}
where $\tau _{\max}=\frac{1}{2\varDelta f}$ and $f_{\mathrm{D},\max}=\frac{1}{2T_{\mathrm{O}}}$ respectively denote the maximum unambiguous time delay and the maximum unambiguous Doppler shift\cite{richards2014fundamentals}. Then, the MLEs of $\tau $/$f_{\mathrm{D}}$ is that which maximizes the DD spectrum $\mathbf{\Lambda }\in \mathbb{R} ^{N_{\mathrm{R}}\times 2T_{\mathrm{V}}}$ \cite{RifeSingle}, whose $(n_{\mathrm{D}},l_{\mathrm{D}})$-th element is described by (\ref{DDSpectrum}) at the top of this page.
\begin{figure*}[htbp]
\begin{equation}\label{DDSpectrum}
 \begin{gathered}
  \begin{aligned}
\left[ \mathbf{\Lambda } \right] _{n_{\mathrm{D}},l_{\mathrm{D}}}=\frac{1}{NT_{\mathrm{FG}}}\left| \sum_{n=0}^{N-1}{\left( \sum_{l=0}^{T_{\mathrm{FG}}-1}{\left[ \mathbf{F} \right] _{n,l}e^{-j2\pi lT_{\mathrm{O}}\boldsymbol{f}_{\mathrm{D}}\left( l_{\mathrm{D}} \right)}} \right) e^{j2\pi n\varDelta f\boldsymbol{\tau }\left( n_{\mathrm{D}} \right)}} \right|^2.
 \end{aligned}
 \end{gathered}
\end{equation}
\hrulefill
\end{figure*}
\par However, for the above DD estimation process, there are $2N_{\mathrm{R}}T_{\mathrm{V}}$ DD spectrum elements to be calculated in order to achieve $\frac{\tau _{\max}}{N_{\mathrm{R}}}$ delay quantization resolution and $\frac{f_{\mathrm{D},\max}}{T_{\mathrm{V}}}$ Doppler shift quantization resolution, leading to a high computational complexity when requiring high-accuracy DD estimation. To reduce the complexity, we design a hierarchical R\&V estimation (H-R\&VE) algorithm as shown in Algorithm 1, which achieves $\frac{\tau _{\max}}{N_{\mathrm{R}}^{I}}$ delay quantization resolution and $\frac{f_{\mathrm{D},\max}}{T_{\mathrm{V}}^{I}}
$ Doppler shift quantization resolution while only requiring $\left( 4I+2 \right) N_{\mathrm{R}}T_{\mathrm{V}}$ calculations of DD spectrum elements, where $I$ is the iteration number.
\vspace{-3mm}
\begin{algorithm}[htbp]
    \KwIn{$\mathbf{S}$}
    \SetAlgoLined 
    \DontPrintSemicolon
	\caption{H-R\&VE Algorithm}
    Initialize $\boldsymbol{\tau }_0$ and $\boldsymbol{f}_{\mathrm{D},0}$ based on (\ref{Qtau}) and (\ref{Qfd}).\;
    
    \For{$i=1,\cdots,I$}{

    Calculate the DD spectrum $\mathbf{\Lambda }$ corresponding to $\boldsymbol{\tau }_{i-1}$ and $\boldsymbol{f}_{\mathrm{D},\left( i-1 \right)}$ based on (\ref{DDSpectrum}).\;

    Calculate $( \hat{n}_{\mathrm{D}},\hat{l}_{\mathrm{D}} ) =\underset{\left( n_{\mathrm{D}},l_{\mathrm{D}} \right)}{\mathrm{arg}\max}\,\,\left[ \mathbf{\Lambda } \right] _{n_{\mathrm{D}},l_{\mathrm{D}}}$.\;

    Update $\boldsymbol{\tau }_i=\boldsymbol{\tau }_{i-1}\left( \hat{n}_{\mathrm{D}} \right) +\frac{\tau _{\max}}{N_{\mathrm{R}}^{i+1}}\left[ -N_{\mathrm{R}},\cdots ,n_{\mathrm{D}},\cdots ,N_{\mathrm{R}} \right] $.\;

    Update $\boldsymbol{f}_{\mathrm{D},i}=\boldsymbol{f}_{\mathrm{D},\left( i-1 \right)}( \hat{l}_{\mathrm{D}} ) +\frac{f_{\mathrm{D},\max}}{T_{\mathrm{V}}^{i+1}}\left[ -T_{\mathrm{V}},\cdots ,l_{\mathrm{D}},\cdots ,T_{\mathrm{V}} \right] $.\;
    }
    Calculate the DD spectrum $\mathbf{\Lambda }$ corresponding to $\boldsymbol{\tau }_{I}$ and $\boldsymbol{f}_{\mathrm{D},I}$ based on (\ref{DDSpectrum}).\;

    Calculate $( \hat{n}_{\mathrm{D}},\hat{l}_{\mathrm{D}} ) =\underset{\left( n_{\mathrm{D}},l_{\mathrm{D}} \right)}{\mathrm{arg}\max}\,\,\left[ \mathbf{\Lambda } \right] _{n_{\mathrm{D}},l_{\mathrm{D}}}$.\;
    
    Calculate $\hat{\tau}=\boldsymbol{\tau }_I\left( \hat{n}_{\mathrm{D}} \right) $ and $\hat{f}_{\mathrm{D}}=\boldsymbol{f}_{\mathrm{D},I}\left( \hat{l}_{\mathrm{D}} \right) $.\;
\KwOut{$\hat{\tau}$, $\hat{f}_{\mathrm{D}}$}
\end{algorithm}

\par After obtaining $\hat{\tau}$ and $\hat{f}_{\mathrm{D}}$, the target R\&V is estimated as
\begin{align}
&\hat{d}_{\mathrm{R}2\mathrm{G}}=\frac{c\hat{\tau}}{2},\label{estimater}\\
&\hat{\mathrm{v}}_{\mathrm{G}}=\frac{\hat{f}_{\mathrm{D}}\lambda}{2}.\label{estimatev}
\end{align}
Moreover, combining the obtained $\hat{d}_{\mathrm{R}2\mathrm{G}}$ in the FGS period and $\left( \hat{u}_{\mathrm{R}2\mathrm{G}}^{\mathrm{D}},\hat{v}_{\mathrm{R}2\mathrm{G}}^{\mathrm{D}} \right) $ in the CGS period, we can estimate the target location as $\hat{\mathbf{q}}_{\mathrm{G}}=\left[ \hat{x}_{\mathrm{G}},\hat{y}_{\mathrm{G}},\hat{z}_{\mathrm{G}} \right] ^{\mathrm{T}}$, where
\begin{align}
&\hat{y}_{\mathrm{G}}=y_{\mathrm{IRS}}-\hat{u}_{\mathrm{R}2\mathrm{G}}^{\mathrm{D}}\hat{d}_{\mathrm{R}2\mathrm{G}},\label{estimatestart}
\\
&\hat{z}_{\mathrm{G}}=z_{\mathrm{IRS}}-\hat{v}_{\mathrm{R}2\mathrm{G}}^{\mathrm{D}}\hat{d}_{\mathrm{R}2\mathrm{G}},
\\
&\hat{x}_{\mathrm{G}}\!=\!x_{\mathrm{IRS}}\!+\!\sqrt{\hat{d}_{\mathrm{R}2\mathrm{G}}^{2}\!-\!\left( y_{\mathrm{IRS}}\!-\!y_{\mathrm{G}} \right) ^2\!-\!\left( z_{\mathrm{IRS}}\!-\!z_{\mathrm{G}} \right) ^2},\label{estimateend}
\end{align}
where $\left[ x_{\mathrm{IRS}},y_{\mathrm{IRS}},z_{\mathrm{IRS}} \right] ^{\mathrm{T}}$ denotes the IRS location.
\par Finally, the main procedures of the proposed IRS-assisted NLOS target sensing strategy are summarized in Algorithm 2.
\vspace{-3mm}
\begin{algorithm}[htbp]
\color{black}{
\SetKw{CGS}{CGS period}
\SetKw{FGS}{FGS period}
    \SetAlgoLined 
    \DontPrintSemicolon
	\caption{IRS-assisted NLOS Target Sensing Strategy}
	\KwIn{$\mathbf{S}$}
    Initialize the hierarchical codebook $\bm{\mathcal{W}}$ according to Section III. B, and initialize the false alarm threshold $\delta $ based on (\ref{Kappa}).
    
    \CGS:{}\;
    Conduct DSP-based 3D HBT for target detection and direction estimation according to Section III. C, which yields $\boldsymbol{\xi }\left( \tilde{i}_K,\tilde{j}_K \right)$.\;
    
    Conduct BR of $\boldsymbol{\xi }\left( \tilde{i}_K,\tilde{j}_K \right)$ according to Section III. D, which yields the refined beam $\boldsymbol{\xi }_{\mathrm{BR}}$ and the estimated target direction $\left( \hat{u}_{\mathrm{R}2\mathrm{G}}^{\mathrm{D}},\hat{v}_{\mathrm{R}2\mathrm{G}}^{\mathrm{D}} \right) $.\;
    
    \FGS:{}\;
    
    Design the IRS beamforming as $\boldsymbol{\xi }_{\mathrm{BR}}$.
      
    Conduct DD estimation on the received echo signals according to Algorithm 1, which yields $\hat{\tau}$ and $\hat{f}_{\mathrm{D}}$.
    
    Calculate the target R\&V based on (\ref{estimater}) and (\ref{estimatev}), which yields $\hat{d}_{\mathrm{R}2\mathrm{G}}$ and $\hat{\mathrm{v}}_{\mathrm{G}}$.
    
    Localize the target based on (\ref{estimatestart}) to (\ref{estimateend}), which yields $\hat{\mathbf{q}}_{\mathrm{G}}$.
    
    \KwOut{$\left( \hat{u}_{\mathrm{R}2\mathrm{G}}^{\mathrm{D}},\hat{v}_{\mathrm{R}2\mathrm{G}}^{\mathrm{D}} \right) $, $\hat{d}_{\mathrm{R}2\mathrm{G}}$, $\hat{\mathrm{v}}_{\mathrm{G}}$, $\hat{\mathbf{q}}_{\mathrm{G}}$.}
}\end{algorithm}
\color{black}
\vspace{-2mm}
\subsection{Performance Bounds for R\&V Estimation}
To obtain more intuitive insights, we consider the special case where each SC is allocated with the same power of $\bar{p}_{\mathrm{T}}$. According to (\ref{YFS}), the received SNR during the FGS period (i.e., $\mathbf{Y}_{\mathrm{FG}}$) is 
\begin{align}
\mathrm{snr}_{\mathrm{FG}}=\frac{\bar{p}_{\mathrm{T}}N_{\mathrm{B}}^{2}\zeta _{\mathrm{G}}^{2}a_{\mathrm{R}2\mathrm{G}}^{4}a_{\mathrm{B}2\mathrm{R}}^{4}\left| h_{\boldsymbol{\xi }_{\mathrm{BR}}}^{2} \right|^2}{\sigma _{0}^{2}}.
\end{align}
Following the derivation in \cite{braun2011single}, the CRLBs for R\&V estimation are given by
\begin{align}
&\mathrm{CRLB}\left( \hat{d}_{\mathrm{R}2\mathrm{G}} \right) \!\geqslant \!\frac{6}{\mathrm{snr}_{\mathrm{FG}}\left( N^2-1 \right) NT_{\mathrm{FG}}}\left( \frac{c}{4\pi \varDelta f} \right) ^2,
\\
&\mathrm{CRLB}\left( \hat{\mathrm{v}}_{\mathrm{G}} \right) \geqslant \frac{6}{\mathrm{snr}_{\mathrm{FG}}\left( T_{\mathrm{FG}}^{2}-1 \right) T_{\mathrm{FG}}N}\left( \frac{c}{4\pi T_{\mathrm{O}}f_c} \right) ^2,
\end{align}
which shows that the accuracy of R\&V estimation during the FGS period can be improved by optimizing the IRS phase shift vector $\boldsymbol{\xi }_{\mathrm{BR}}$, adding more time-frequency resources $N T_{\mathrm{FG}}$, as well as allocating more transmit power $\bar{p}_{\mathrm{T}}$.

\section{Extension To the General Multi-Target Case}\label{section5}
The proposed IRS-assisted NLOS target sensing system can be extended to the more general case with $A$ targets. 
\par The multi-target detection and direction estimation during the CGS period mainly includes the following four steps:
\begin{enumerate}
    \item Design the hierarchical codebook of IRS beamforming (See Section III. B).
    \item Conduct a DSP-based multi-target 3D HBT which has $K$ stages. Specifically, in stage $1$, the IRS generates $4$ beams in sequence, by applying the codewords of the first layer. Meanwhile, according to the received echo signals, the DFBS calculates the DSP corresponding to each training beam. In stage $k$ ($k>1$), those beams with DSP higher than the false alarm threshold $\delta $ in the previous stage are selected, and the coverage area of each selected beam is equally divided into four parts to be scanned further, by applying the $k$-th layer's codewords.
    \item Determine the presences and directions of $A$ targets.
    Note that, in the multi-target case, the selected areas in the last stage can not be directly determined with target presence due to the existence of sidelobe gain. Moreover, combining (\ref{aR2G}), (\ref{alphaCG}), and (\ref{DSP_ori}), we observe that the DSP is linearly proportional to ${d_{\mathrm{R}2\mathrm{G}}}^{-2\epsilon _{\mathrm{R}2\mathrm{G}}}$, which indicates that the difference in target distance results in substantial variation in the DSP of target echo signals, rendering the challenge of determining multi-target presence/absence based on the DSP metric. To cope with these issues, we define the distance-normalized delay spectrum as
    \begin{align}
    \left[ \bar{\boldsymbol{\varGamma}}_l \right] _{n_{\mathrm{Q}}}=\left[ \boldsymbol{\varGamma }_l \right] _{n_{\mathrm{Q}}}\left( \frac{c\tau _{n_{\mathrm{Q}}}}{2} \right) ^{2\epsilon _{\mathrm{R}2\mathrm{G}}},n_{\mathrm{Q}}\in \mathcal{N} _{\mathrm{Q}}.
    \end{align}
    Then, we calculate the distance-normalized DSP corresponding to the selected areas in the last stage, and determine the $A$ areas corresponding to the $A$ largest distance-normalized DSPs with target presence.
    \item Conduct BR for the areas that are determined with target presence, and estimate the directions of $A$ targets based on the directions of the refined beams.
\end{enumerate}
\par For multi-target R\&V estimation, during the FGS period, the IRS beamforming is sequentially configured following the $A$ refined beams obtained during the CGS period, where each beam lasts for $T_{\mathrm{FG}}/A$ OFDM symbols. Then, by applying Algorithm 1, the DFBS processes the echo signals received during the training duration of each beam, which yields the R\&Vs of the NLOS targets within the corresponding beam coverage (See Section IV).
\color{black}
\section{Numerical Results}\label{section6}
This section verifies the effectiveness of the proposed IRS-assisted NLOS target sensing framework via simulations. The DFBS and the IRS are respectively deployed at $\mathbf{q}_{\mathrm{B}}=\left[ 35,-20,10 \right] ^{\mathrm{T}}$ m and $\mathbf{q}_{\mathrm{IRS}}=\left[ 0,0,10 \right] ^{\mathrm{T}}$ m. The NLOS target is randomly generated, which satisfies $u_{\mathrm{R}2\mathrm{G}}^{\mathrm{D}},v_{\mathrm{R}2\mathrm{G}}^{\mathrm{D}}\in \left[ -\frac{1}{2},\frac{1}{2} \right] $ and $d_{\mathrm{R}2\mathrm{G}}=10$ m. The total transmit power is $25$ dBm, which is uniformly allocated among all SCs. If not specified otherwise, we set the system parameters as given in TABLE~\ref{table1}.
\vspace{-5mm}
\begin{table}[htbp]
\centering
\caption{System Parameters}
\vspace{5mm}
\label{table1}
\begin{tabular}{|c|c|c|c|}
\hline
\textbf{Parameter} &\textbf{Value} &\textbf{Parameter} &\textbf{Value} \\ \hline
$f_c$ & $28.5$ & $B$ & $100$ MHz \\ \hline
$\varDelta f$ & $120$ KHz & $N$ & $833$ \\ \hline
$T_{\mathrm{O}}$ & $8.33$ $\mu$s& $T_{\mathrm{cp}}$ & $0.58$ $\mu$s\\ \hline
$M$ & $64$ & $N_{\mathrm{B}}$ & $64$ \\ \hline
$T_{\mathrm{CG}}$ & $24$ &  $T_{\mathrm{FG}}$ &  $42$ \\\hline
$\sigma _{0}^{2}$ & $-123.2$ dBm&  $\mathrm{v}_{\mathrm{G}}$ &  $20$ m/s\\\hline
$\epsilon _{\mathrm{B}2\mathrm{R}}$ & $2.1$ &  $\epsilon _{\mathrm{R}2\mathrm{G}}$ &  $2.2$ \\\hline
$\zeta _{\mathrm{G}}^{2}$ & $1$ &  $\bar{p}_\mathrm{FAR}$ &  $0.01$ \\\hline
$N_{\mathrm{Q}}$ & $833$ & $I$  & $10$  \\\hline
$N_{\mathrm{R}}$ & $100$ & $T_{\mathrm{V}}$  & $100$  \\\hline
\end{tabular}
\end{table}
\vspace{-5mm}
\subsection{Performance of Target Detection and Direction Estimation in the CGS Period}
\par First, Fig.~\ref{CS_N} contrasts the performance of the 3D HBT with the DSP detector and the RSS detector in terms of target detection success rate, which is depicted via the probability of the target being located within the coverage area of $\boldsymbol{\xi }( \hat{i}_K,\hat{j}_K ) $. As analyzed in Proposition 1, the success rate of the proposed DSP-based HBT surpasses the RSS-based HBT, which is attributed to its capability of exploiting the correlation between multiple SCs for noise suppression. Moreover, we can observe that adopting more SCs enhances the success rates of both two HBT schemes, particularly with fewer SCs. This improvement stems from the noise mitigation advantage gained by gathering observation data across more SCs. 
\begin{figure}[htbp]
  \centering
   \includegraphics[width=2.8in]{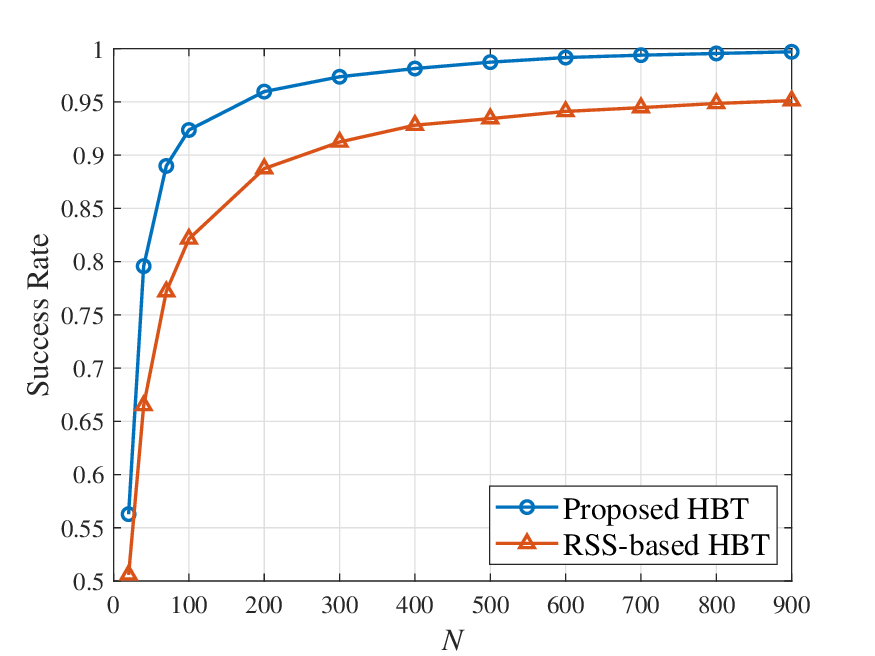}
  \caption{Target detection success rate versus $N$.}
  \label{CS_N}
\end{figure}
\color{black}
\par Then, we compare the target direction estimation error of the proposed HBT scheme with four benchmark schemes: 1) HBT without the BR process; 2) HBT without the BC process; 3) EBS scheme \cite{9547829}; 4) MLE-based random beam training (MLE-RBT) scheme \cite{WangJoint2}, with its beam training overhead set to be the same as the proposed HBT scheme (i.e., $( 2+4\log _2M ) $ OFDM symbols). The target direction estimation error is described by
\begin{align}
\varepsilon _{\mathrm{DR}} =\mathbb{E} \left\{ \pi \sqrt{\left( u_{\mathrm{R}2\mathrm{G}}^{\mathrm{D}}-\hat{u}_{\mathrm{R}2\mathrm{G}}^{\mathrm{D}} \right) ^2+\left( v_{\mathrm{R}2\mathrm{G}}^{\mathrm{D}}-\hat{v}_{\mathrm{R}2\mathrm{G}}^{\mathrm{D}} \right) ^2} \right\} .
\end{align}
As shown in Fig.~\ref{CS_M}, with fewer IRS elements, the proposed HBT scheme underperforms the EBS scheme and the MLE-RBT scheme, due to the inadequate beamforming gain at the beginning stages of the HBT. However, by adopting more IRS elements, the direction estimation accuracy of the proposed HBT scheme gradually surpasses that of the MLE-RBT scheme, because the random beam training can not fully exploit the IRS beamforming gain and the Doppler shift effect limits the performance of MLE. Subsequently, the proposed HBT outperforms the EBS scheme, attributing to the use of the BR process that further improves the direction estimation accuracy. Moreover, the target direction estimation accuracy improves across all training schemes as the quantity of IRS elements increases, This enhancement is because, adding IRS elements not only improves the codeword resolution (i.e., $2/M$), but also obtains more substantial IRS beamforming gain to benefit the target detection process. In addition, the exclusion of either the BR or BC process results in a deterioration in HBT performance, highlighting the indispensable role of the BR/BC process.
\begin{figure}[htbp]
  \centering
   \includegraphics[width=2.8in]{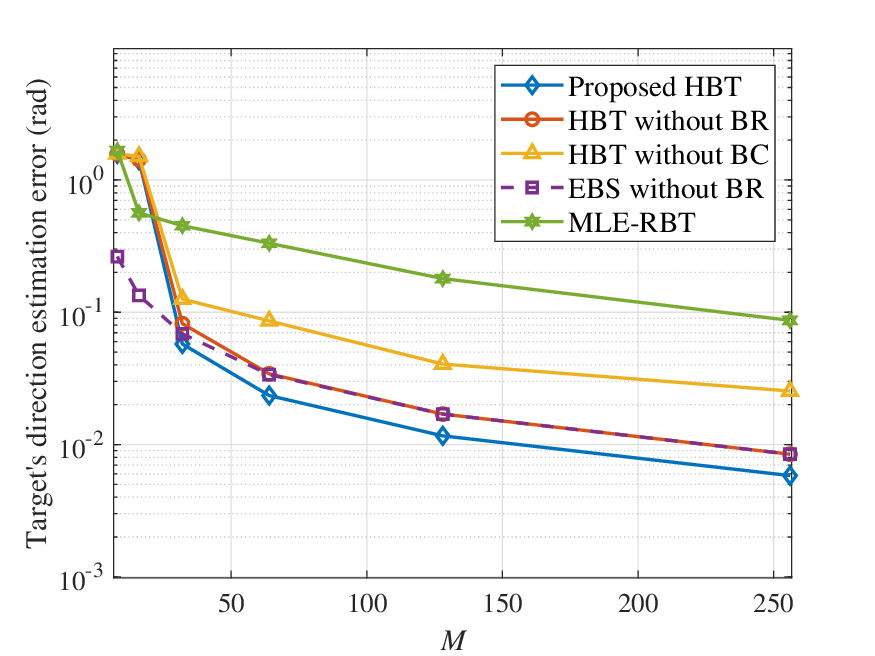}
  \caption{Target direction estimation error versus $M$.}
  \label{CS_M}
\end{figure}
\color{black}

\subsection{Performance of target R\&V Estimation in the FGS Period}
\color{black}
\begin{figure}[htbp]
  \centering
  \subfigure[Range Estimation.]  
  {
  \includegraphics[width=2.8in]{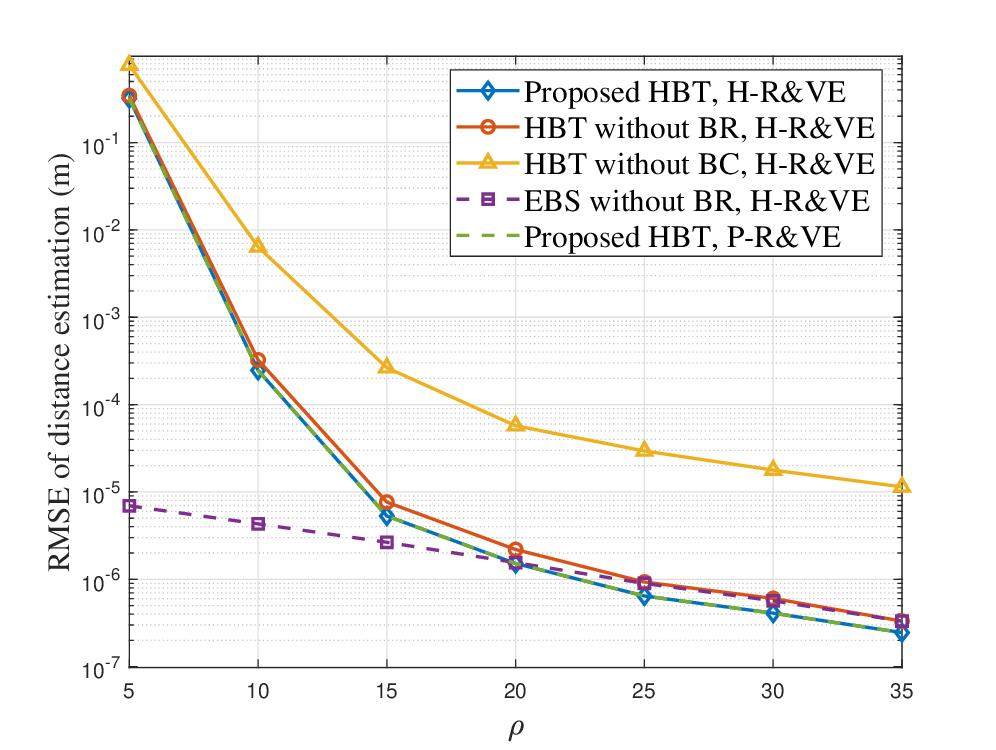}
  }
  \subfigure[Velocity Estimation.] 
  {
  \includegraphics[width=2.8in]{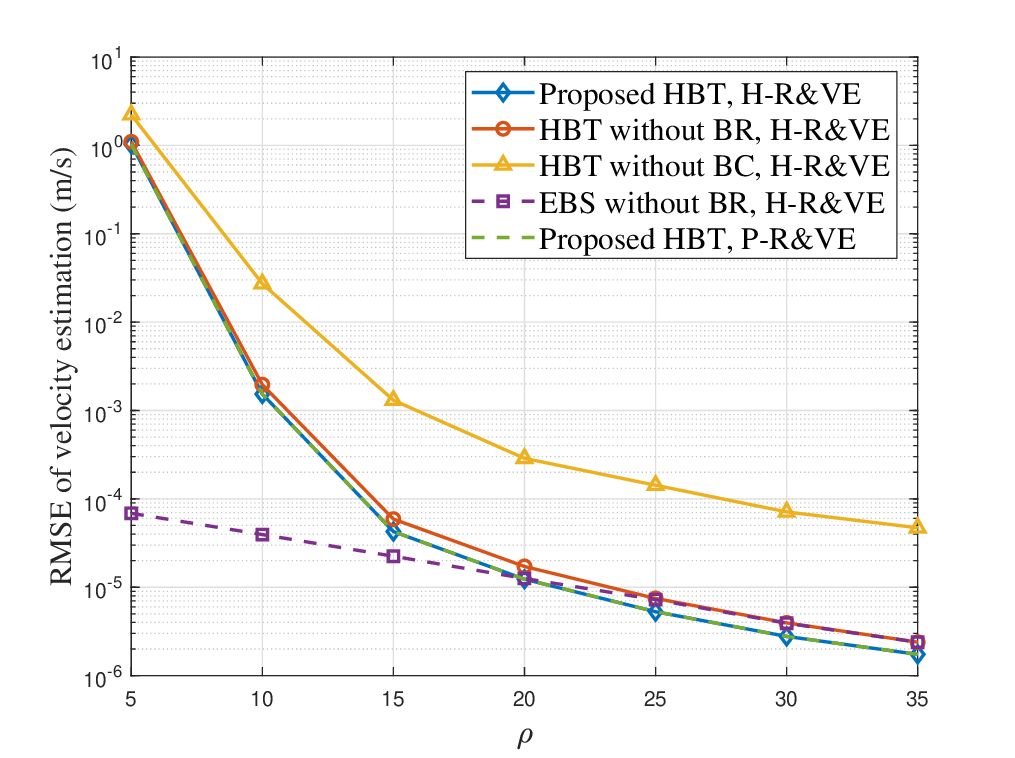}
  }
  \caption{R\&V estimation accuracy with different beam training schemes.}
  \label{FS_rho}
\end{figure}

Fig.~\ref{FS_rho} provides comparisons between the FGS performance across different beam training schemes and different R\&V estimation schemes, where “P-R\&VE” refers to the periodogram-based R\&V estimation algorithm \cite{PucciSystem2022}. The accuracy of R\&V estimation is evaluated by the root mean square error (RMSE) defined as
\begin{align}
\varepsilon _{\mathrm{D}}\!=\!\sqrt{\mathbb{E} \left\{ \!\left| \hat{d}_{\mathrm{R}2\mathrm{G}}-d_{\mathrm{R}2\mathrm{G}} \right|^2\! \right\}},\varepsilon _{\mathrm{V}}\!=\!\sqrt{\mathbb{E} \left\{ \!\left| \hat{\mathrm{v}}_{\mathrm{G}}-\mathrm{v}_{\mathrm{G}} \right|^2\! \right\}}.
\end{align}
The proposed H-R\&VE algorithm achieves the same estimation accuracy as the P-R\&VE algorithm while reducing the required calculations of DD spectrum elements from $\left( N_{\mathrm{R}}T_{\mathrm{V}} \right) ^I$ to $\left( 4I+2 \right) \sqrt[I]{N_{\mathrm{R}}T_{\mathrm{V}}}$.
Since the configuration of IRS beamforming is directly influenced by the outcome of the beam training process, the performance of both R\&V estimations varies in accordance with the adopted beam training schemes. At the lower SNR level, the EBS excels with the highest accuracy of R\&V estimation, benefiting from its stable direction estimation ability. Nevertheless, at the medium/high SNR level, the proposed HBT scheme performs the best in terms of both R\&V estimation, attributing to the BR process adoption. Moreover, excluding the BC process results in a profound degradation of the accuracy in both R\&V estimation, which verifies the pivotal role of the BC process in FGS accuracy.
\color{black}

\begin{figure}[htbp]
  \centering
  \subfigure[Range estimation accuracy versus $N$.]  
  {
  \label{FS_N}
  \includegraphics[width=2.8in]{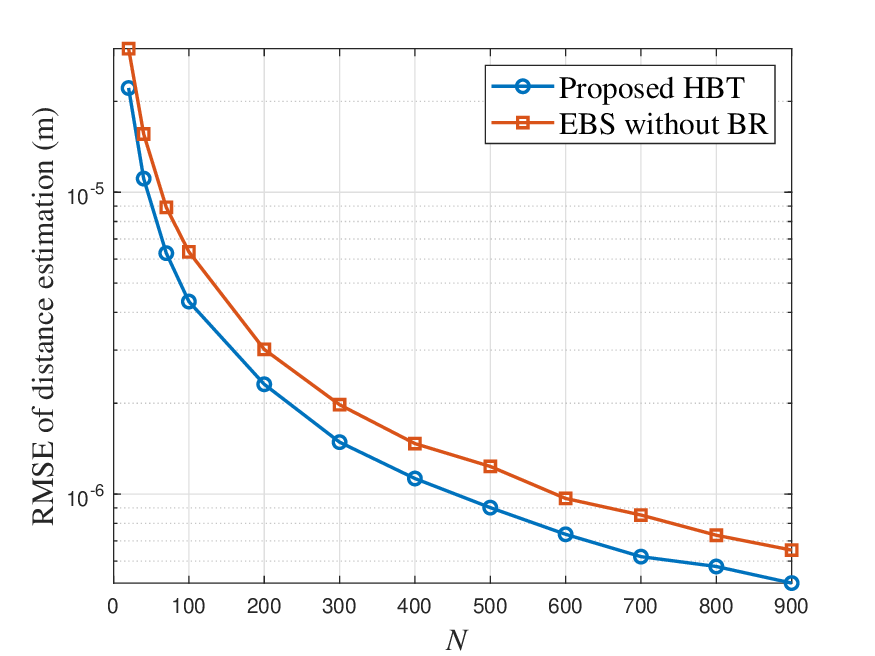}
  }
  \subfigure[Velocity estimation accuracy versus $T_{\mathrm{FG}}$.] 
  {
  \label{FS_T}
  \includegraphics[width=2.8in]{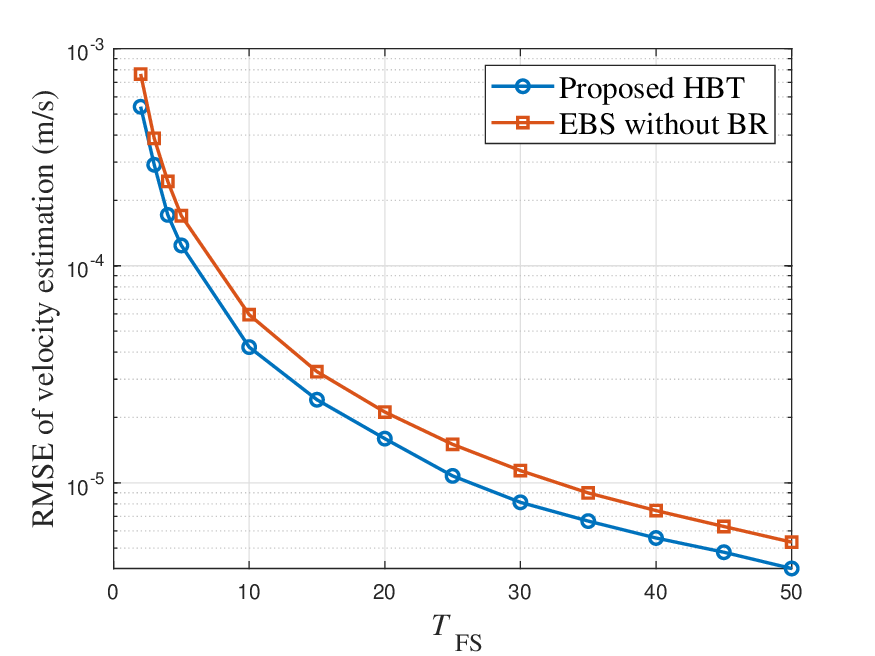}
  }
  \caption{R\&V estimation accuracy under different time-frequency resource allocation conditions.}
  \label{FS_NT}
\end{figure}
Fig.~\ref{FS_NT} presents the influence of time-frequency resource allocation on the performance of FGS. It can be seen that adopting more SCs and OFDM symbols improves range estimation accuracy and velocity estimation accuracy, respectively. As expected in (\ref{DDSpectrum}), collecting observation data from more SCs and OFDM symbols helps decrease the noise power in the DD domain, thereby increasing the reliability of R\&V measurement.

\begin{figure}[htbp]
  \centering
  \subfigure[Range Estimation.]  
  {
  \includegraphics[width=2.8in]{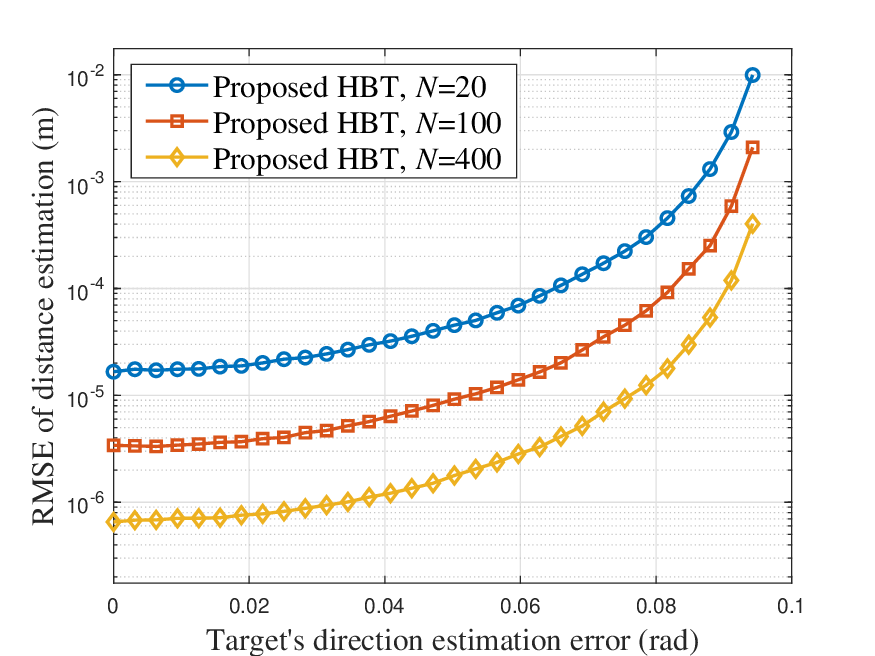}
  }
  \subfigure[Velocity Estimation.] 
  {
  \includegraphics[width=2.8in]{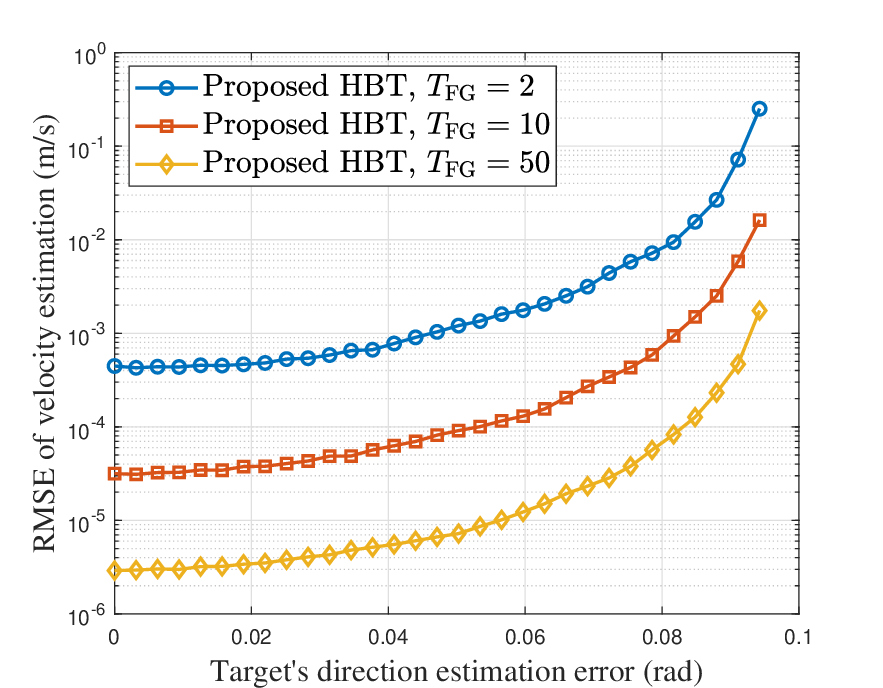}
  }
  \caption{R\&V estimation accuracy versus target's direction estimation error.}
 \label{CtoF}
\end{figure}
Fig.~\ref{CtoF} illustrates the impact of CGS performance on the FGS performance. Since the IRS beam points at the target's direction estimated in the CGS period, the accuracy of the target R\&V estimation in the FGS period increases as the direction estimation accuracy becomes higher. Specifically, with the direction estimation accuracy increasing, the FGS performance improves slightly in the high-accuracy region of direction estimation, while improving significantly in the low-accuracy region. In addition, when small time-frequency resources are allocated for FGS, the resultant decrease in its performance can be counterbalanced by enhancing the CGS performance. For instance, when the number of SCs decreases from $400$ to $100$, the range estimation accuracy remains unchanged at $10^{-5}$ m by decreasing the target's direction estimation error from about $0.08$ rad to $0.05$ rad.

\begin{figure}[htbp]
  \centering
   \includegraphics[width=2.8in]{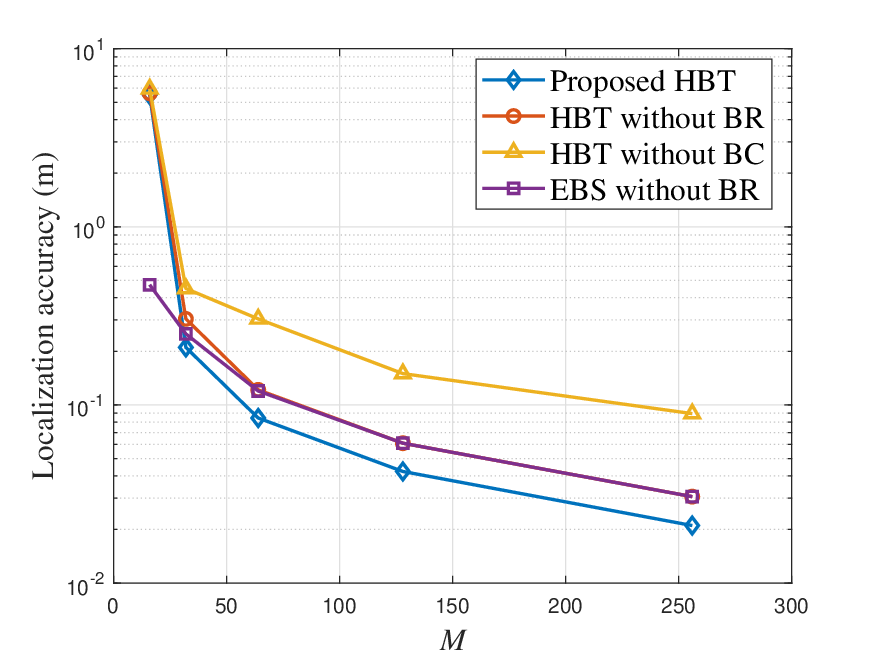}
  \caption{Localization accuracy versus $M$.}
  \label{localization}
\end{figure}
Finally, Fig.~\ref{localization} studies the localization accuracy of the proposed target sensing system versus the numbers of IRS elements across various beam training schemes, where the localization accuracy is evaluated by the RMSE defined as
\begin{align}
\varepsilon _{\mathrm{L}}=\sqrt{\mathbb{E} \left\{ \left| \hat{\mathbf{q}}_{\mathrm{G}}-\mathbf{q}_{\mathrm{G}} \right|^2 \right\}}.
\end{align}
The localization accuracy with the proposed HBT strategy is lower than with the EBS strategy at first, and gradually becomes better
with the increase of the number of IRS elements. Without either the BR or BC process, the localization accuracy gets worse. This is because, more accurate direction estimation in the CGS period can provide higher beamforming gain in the FGS period to achieve better range estimation performance, while the localization performance is exactly determined jointly by the direction estimation accuracy and range estimation accuracy.

\section{Conclusion}\label{section7}
In this article, we considered the NLOS target sensing problem in an IRS-assisted OFDM DFRC system, and designed a novel target sensing framework, including the sensing protocol, detection and direction estimation scheme, as well as R\&V estimation scheme. Specifically, we first designed a target sensing protocol, where the whole CPI is divided into the CGS period for target detection and direction estimation and the FGS period for target R\&V estimation. \black{For the CGS period, we proposed a low-overhead hierarchical codebook for IRS 3D beamforming, developed a DSP-based HBT scheme for target detection and direction estimation, and designed a BR scheme for further enhancing the direction estimation accuracy.} For the FGS period, we estimated the target's R\&V by extracting the DD information from the IRS relected echo signals. Numerical results showed that, the proposed DSP detector can reach a remarkable detection success rate of $99.7\%$, outperforming the traditional power detector by $10\%$. \black{Simulations also indicated that the proposed IRS 3D beam training scheme suits the scenario with a substantial quantity of IRS elements, where the proposed scheme can precisely estimate the target direction with $10^{-3}$-rad level accuracy, even surpassing the EBS scheme, which is attributed to the usage of the BR process and the DSP detector.} In addition, by applying the beam training result to assist the FGS period, a $10^{-6}$-m level range estimation accuracy and a $10^{-5}$-m/s level velocity estimation accuracy can be achieved.

\bibliographystyle{IEEEtran}
\bibliography{references}{}

\end{document}